\newtheorem{theorem}{Theorem}
\newtheorem{definition}[theorem]{Definition}
\newtheorem{lemma}[theorem]{Lemma}
\newtheorem{corollary}[theorem]{Corollary}
\newtheorem{observation}[theorem]{Observation}
\newtheorem{proposition}[theorem]{Proposition}
\newcommand{\pref}{\succ}
\newcommand{\agent}{a}
\newcommand{\matching}{\mu}
\newcommand{\inputsetting}{I}
\newcommand{\eqclass}{C}
\newcommand{\minicr}{{\sc Min-ICR}}
\newcommand{\icr}{{\sc ICR-Dec}}
\newcommand{\smpi}{{\sc SMPI}}
\newcommand{\smi}{{\sc SMI}}
\newcommand{\smt}{{\sc SMT}}
\newcommand{\smti}{{\sc SMTI}}
\newcommand{\minicrx}{{\sc Min-ICR-Exact}}
\newcommand{\icrx}{{\sc ICR-Exact-Dec}}
\newcommand{\vc}{{\sc VC}}
\newcommand{\minvc}{{\sc Min-VC}}
\newcommand{\vct}{{\sc VC-3}}
\newcommand{\manset}{M}
\newcommand{\womanset}{W}
\newcommand{\numofmen}{n}
\newcommand{\numofwomen}{n}
\newcommand{\man}{m}
\newcommand{\woman}{w}
\newcommand{\partialorder}{p}
\newcommand{\candidate}{c}
\newcommand{\interviewset}{T}
\providecommand{\myceil}[1]{\left \lceil #1 \right \rceil}
\newcommand{\hide}[1]{}
\begin{document}
\title{\bf Preference Elicitation in Matching Markets via Interviews: A Study of Offline Benchmarks\thanks{This research has been supported by EPSRC grants EP/K01000X/1 and EP/K010042/1. The authors gratefully acknowledge the support of COST Action IC1205 on Computational Social Choice. The authors would also like to thank Rob Irving and Piotr Krysta for the useful discussions and their valuable feedback.}
}

\author{Baharak Rastegari$^1$, Paul Goldberg$^2$, David Manlove$^1$\\
\\
\small
\emph{$^1$ School of Computing Science, University of Glasgow, Glasgow, UK}
\\
\small
\emph{Email: {\tt \{baharak.rastegari,david.manlove\}@glasgow.ac.uk}.}
\\
\\
\small
\emph{$^2$ Department of Computer Science, University of Oxford, Oxford, UK}\\
\small
\emph{Email: {\tt paul.goldberg@cs.ox.ac.uk}.}}
\date{}	

\maketitle              

\begin{abstract}
The {\em stable marriage problem} and its extensions have been extensively studied, with much of the work in the literature assuming that agents fully know their own preferences over alternatives. This assumption however is not always practical (especially in large markets) and agents usually need to go through some costly deliberation process in order to learn their preferences.  In this paper we assume that such deliberations are carried out via \emph{interviews}, where an interview involves a man and a woman, each of whom learns information about the other as a consequence.  If everybody interviews everyone else, then clearly agents can fully learn their preferences.  But interviews are costly, and we may wish to minimize their use.  It is often the case, especially in practical settings, that due to correlation between agents' preferences, it is unnecessary for all potential interviews to be carried out in order to obtain a stable matching.  Thus the problem is to find a good strategy for interviews to be carried out in order to minimize their use, whilst leading to a stable matching.  One way to evaluate the performance of an interview strategy is to compare it against a na\"ive algorithm that conducts all interviews. We argue however
that a more meaningful comparison would be against an optimal offline algorithm that has access to agents' preference orderings under complete information. We show that, unless P=NP, no offline algorithm can compute the optimal interview strategy in polynomial time.  If we are additionally aiming for a particular stable matching (perhaps one with certain desirable properties), we provide restricted settings under which efficient optimal offline algorithms exist. 
\end{abstract}

\noindent
{\bf Keywords:} Two-sided matching; preferences; interviews

%
\section{Introduction}
\label{sec:intro}
{\em Two-sided matching markets} model many practical settings, such as corporate hiring and university admission \cite{two-sided-matching-book,manlove-book}.  The classical {\em stable marriage problem} is perhaps the most widely studied matching problem in this class, where participants are partitioned into two disjoint sets -- men and women -- and each participant on one side of the market wishes to be matched to a candidate from the other side of the market, and has preferences over potential matches.  A matching is called stable if no pair of participants would prefer to leave their assigned partners to pair with each other. Gale and Shapley's seminal paper~\cite{gale-shapley-62} proposed a polynomial-time algorithm for finding a stable matching. The books by Knuth~\cite{knuth-french}, Gusfield and Irving~\cite{irving-gusfield}, Roth and Sotomayor~\cite{two-sided-matching-book}, and Manlove~\cite{manlove-book} provide excellent introductions and surveys.

A key assumption in much of this literature is that all market participants know their full 
preference orderings. The classical Gale-Shapley (GS) algorithm~\cite{gale-shapley-62} and its variants require participants' preferences as input. This assumption is reasonable in some settings. However, as markets grow large (e.g., in the hospital-resident matching market~\cite{roth-nrmp,irving-manlove-09} or college admission market~\cite{gale-shapley-62,Roth-85}) it quickly becomes impractical for participants to assess their precise preference rankings. Instead, participants usually start out with some partial knowledge about their preferences and need to perform some deliberation in order to learn their precise preference ordering. In this paper we assume that deliberations are carried out via \emph{interviews}, where an interview is a unit operation that involves one agent from each side of the market and is informative to both participants. For example, in the hospital-resident problem (which models the entry-level labor market in which graduating medical students, or residents, are seeking to be assigned to hospital posts), hospitals are likely to be able to identify their ``top-tier'' residents, ``second-tier'' residents and so on, and in order to rank the residents in each tier they need to interview them. An interview between a hospital $h$ and a resident $r$ yields information about the qualities of each party to the other.  Thus we initially assume that each agent's preference list is in general expressed in terms of a partial order, and after an agent has interviewed $\ell$ members of the opposite side of the market, he/she has discovered enough information to rank those elements in strict order.

In order to be able to use the GS algorithm to find a stable matching in this setting, a na\"ive solution is for each participant to conduct all potential interviews and fully learn their preferences. Interviews however are usually costly both in terms of time, mental energy, and money, therefore we wish to minimize their usage. Indeed, the na\"ive approach may impose unnecessary deliberation. For example, in the hospital-resident problem, one expects some degree of correlation across hospitals in the assessment of the most desirable residents, and likewise residents are expected to have correlated views (at least to some extent) on the desirability of hospitals. Therefore, it is expected that more desirable residents get matched to more desirable hospitals and so on. It is then not hard to see that it is wasted effort if a top-tier resident is to interview low-tier hospitals, or a second-tier resident is to interview top-tier hospitals. For a concrete example, consider a setting with four residents and four hospitals where each hospital can admit (at most) one resident. Assume that residents $r_1$ and $r_2$ are top-tier residents and $r_3$ and $r_4$ are second-tier residents. Likewise assume that $h_1$ and $h_2$ are top-tier hospitals and $h_3$ and $h_4$ are second-tier hospitals. The preference lists of all agents are correlated according to these hierarchies, although each agent's individual strict ranking (initially unknown) within these hierarchies may differ.  It is not hard to verify that no matter what the true (initially unknown) preference orderings of the participants are, under a stable matching $r_1$ and $r_2$ each gets matched to either $h_1$ or $h_2$, and $r_3$ and $r_4$ each gets matched to either $h_3$ or $h_4$. Thus an interview between $r_1$ and either $h_3$ or $h_4$ is unnecessary, for example.
%

Unfortunately, we cannot always avoid unnecessary interviews. For example consider a setting with two residents and two hospitals, where initially agents have no information on their preference orderings and hence cannot compare the two alternatives. W.l.o.g.\ assume that $h_1$ interviews both residents and learns that it prefers $r_1$ to $r_2$. If $r_1$ additionally interviews $h_2$ and learns that he prefers $h_1$ to $h_2$, then a stable matching $\matching$ is found after 3 interviews, in which $h_i$ is matched to $r_i$ ($1\leq i\leq 2$). Now imagine that $r_1$ instead learns that he prefers $h_2$ to $h_1$. It is easy to verify that the identity of a stable matching is not yet revealed and hence more interviews are required. The only remaining interview is between $h_2$ and $r_2$ after which one can definitely identify a stable matching. If $h_2$ learns that it prefers $r_2$ to $r_1$ then $\matching$ is a stable matching. Assume that $r_2$ also prefers $h_2$ to $h_1$. In this case the interview between $h_1$ and $r_1$ is unnecessary as the three other interviews would have provided enough information -- that $h_2$ and $r_2$ are each others' top choice -- for $\matching$ to be identified as a stable matching. However, a priori we can not always rule the interview between $h_1$ and $r_1$ as unnecessary; for example when $h_2$'s top choice is $r_2$ but $r_2$'s top choice is $h_1$.

Any interviewing strategy leads to refinements of the partial orders contained in the original problem instance that represented uncertainty over the true preferences.  A key aim could be to carry out sufficient interviews so as to arrive at an instance that admits a super-stable matching $\mu$.  Super-stability will be defined formally in the next section, but informally it ensures that $\mu$ will be stable regardless of how the remaining uncertainty is resolved.  
The original instance need not admit a super-stable matching (see \cite{Irving94} for an example) but we are guaranteed that a super-stable matching is always achievable (e.g., by conducting all possible interviews, we will arrive at a strictly-ordered instance, where super-stability and classical stability become equivalent, and the existence of a stable matching is assured \cite{gale-shapley-62}).

Thus our aim is to find a good strategy that conducts as few interviews as possible so as to obtain a refined instance that admits a super-stable matching.  In general any such strategy will be an online algorithm, since the next interview to be carried out might depend on the results of previous ones.

This leads to the question of how to evaluate the performance of any given interview strategy. One could compare it against the na\"ive algorithm described above that conducts all interviews. We argue however, by analogy with online algorithms and their competitive ratio, that it makes more sense to compare it against an optimal ``offline'' algorithm.  Here, the optimal offline algorithm has access to agents' preference orderings under full information and has to compute the optimal (i.e., minimum) number of interviews required in order to reach an information state under which it can identify a super-stable matching. In this paper we show that unless P=NP, no offline algorithm can compute an optimal interview strategy in polynomial time.

Some stable matchings have desirable properties, and we may be interested in refining the preferences further so as to obtain such matchings.  For example, in the \emph{man-optimal stable matching}, each man has the best partner that he could obtain in any stable matching, whilst the \emph{woman-optimal stable matching} has a similar optimality property for the women.  As described above, after a certain number of interviews we may reach an instance that admits a super-stable matching $\mu$.  But by carrying out more interviews, some men, for example, may end up with better partners than they had in $\mu$.  This would be the case if $\mu$ is not the man-optimal stable matching in the instance with the strict (true underlying) preferences.

If we wish to evaluate the performance of an online algorithm that aims for potential improvements in men's partners even after a super-stable matching has been identified, then a suitable offline benchmark for the competitive ratio would be the minimum number of interviews required to refine the original instance so as to make a \emph{specified} matching super-stable.  In this paper we show that, whilst this problem is NP-hard in general, there are restricted cases that are solvable in polynomial time.
  

\paragraph{Related work} 
Until very recently, the problem of incremental preference elicitation has received little attention. Several works in the past few years however have addressed this problem from different angles~\cite{KU08,schwarz,DB-ijcai13,federico-extremal,RCIL-ec13,gium,DB-aaai14}. Those closest in spirit to ours are \cite{DB-ijcai13,RCIL-ec13,DB-aaai14}. 

In \cite{RCIL-ec13} the authors introduced a stable matching model in which participants start out with incomplete information about their preferences, in the form of partially ordered sets, and are able to refine their knowledge by performing interviews. They investigated the problem of minimizing the number of interviews required to find a matching that is stable w.r.t.\ the true underlying strict preference ordering and additionally is optimal for one side of the market. They presented several results among which are the following two: (i) finding a minimum certificate, that is a set of partial preferences that supports an optimal (for one side of the market) stable matching is NP-hard, and (ii) in a setting where participants on one side of the market have the same partially ordered preferences, an optimal interview policy can be found in polynomial time.

In \cite{DB-ijcai13}, the authors studied a setting where deliberation is in the form of pairwise comparison queries (that is, a query leads to strict order of preference being determined over two acceptable agents for a given agent). They proposed a method for finding approximately stable matchings, using minimax regret as a measure, while keeping the number of required comparisons relatively low. In \cite{DB-aaai14} the authors combined the comparison query model  of \cite{DB-ijcai13} with the interview model of \cite{RCIL-ec13} and introduced a unified model where both types of elicitation can take place. They provided an efficient (polynomial-time) scheme for generating queries and interviews, and examined the effectiveness of their scheme via empirical evaluation including comparison against the polynomial-time algorithm of \cite{RCIL-ec13} for the restricted setting in which participants on one side of the market have the same partially ordered preferences.

Our work is also related to the body of literature studying variants of stability  defined in settings where agents' preferences may include ties. As discussed above, super-stable matchings are relevant in the context of incomplete preference information, because they are stable no matter which refinements represent the true (strict) preferences.  
Polynomial time algorithms have been proposed for finding a super-stable matching, or reporting that none exists, in various two-sided matching markets \cite{Irving94,manlove99,irving-manlove-scott-hospitals,RCIIL-ec14}.  

In the next section we provide definitions of notation and terminology, leading to formal statements of the problems under consideration in this paper.  A roadmap of the remaining sections is then given at the end of Section \ref{sec:defs}.

%
\section{Preliminary definitions and results}
\label{sec:defs}
\subsection{\smpi, \smti, and levels of stability}
In an instance of the \emph{Stable Marriage problem with Partially ordered preferences and Incomplete lists (\smpi)}, there are two sets of agents, namely a set of men $\manset=\{\man_1, \man_2, \ldots, \\ \man_{\numofmen}\}$, and a set of women $\womanset=\{\woman_1, \woman_2, \ldots, \woman_{\numofwomen}\}$.  We assume without loss of generality that $|\manset|=|\womanset|$ (we can easily reduce the case where the two sets are of different sizes to our setting). Let $[i]$ denote the set $\{1,2,\ldots,i\}$. We use the term \emph{agents} when making statements that apply to both men and women, and the term \emph{candidates} to refer to agents on the opposite side of the market to that of an agent under consideration. Each agent $\agent$ finds a subset of candidates acceptable -- we refer to these as $\agent$'s \emph{acceptable} candidates.  An agent $a$'s preferences over his/her acceptable candidates need not be strict.  That is, given two candidates, $a$ might not be able to compare them against each other.
We denote by $\partialorder_{\man_i}$ and $\partialorder_{\woman_j}$ the partial orders that represent the preferences of $\man_i$ and $\woman_j$, respectively. We let $\partialorder_{\manset,\womanset} =(\partialorder_{\man_1},\ldots,\partialorder_{\man_{\numofmen}},\partialorder_{\woman_1} ,\ldots,\partialorder_{\woman_{\numofwomen}})$ and call $\partialorder_{\manset,\womanset}$ a \emph{partial preference ordering profile}.  

Let $\inputsetting=(\manset,\womanset,\partialorder_{\manset,\womanset})$ be an instance of \smpi, let $\agent$ be an agent and let $\candidate_1$ and $\candidate_2$ be two acceptable candidates for $\agent$ in $I$.  We say that $\agent$ \emph{strictly prefers} $\candidate_1$ to $\candidate_2$ if $(\candidate_1,\candidate_2) \in \partialorder_{\agent}$, and we say that $\agent$ \emph{cannot compare} $\candidate_1$ and $\candidate_2$ (or that $\agent$ finds $\candidate_1$ and $\candidate_2$ \emph{incomparable}) if $(\candidate_1,\candidate_2)\notin \partialorder_{\agent}$ and $(\candidate_2,\candidate_1)\notin \partialorder_{\agent}$.
We sometimes use the graph-theoretic representation of $\partialorder_{\agent}$ where candidates in $\partialorder_{\agent}$ correspond to vertices and there is an arc from a candidate $\candidate_i$ to a candidate $\candidate_j$ if and only if $(\candidate_i,\candidate_j) \in \partialorder_{\agent}$.

An instance $\inputsetting'=(\manset,\womanset,\partialorder'_{\manset,\womanset})$ of \smpi\ is a \emph{refinement} of $\inputsetting$ if for each agent $\agent$, any strict total order that is a linear extension of $\partialorder'_{\agent}$ is also a linear extension of $\partialorder_{\agent}$.  We may also refer to $\partialorder'_{\manset,\womanset}$ being a refinement of $\partialorder_{\manset,\womanset}$ (or indeed $\inputsetting$) using the same definition.  Also we can define $\partialorder'_\agent$ being a refinement of $\partialorder_\agent$ for some specific agent $\agent$ similarly.
\begin{observation}\label{obs:refinement}
Given two instances $\inputsetting$ and $\inputsetting'$ of \smpi, $\inputsetting'$ is a refinement of $\inputsetting$ if and only if the following condition holds: for each agent $\agent$ and every two candidates $\candidate_1$ and $\candidate_2$ acceptable to $\agent$, if $(\candidate_1,\candidate_2)\in\partialorder_{\agent}$ then $(\candidate_1,\candidate_2)\in\partialorder'_{\agent}$.
\end{observation}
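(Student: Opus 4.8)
The plan is to prove the two implications of the biconditional separately, working one agent $\agent$ at a time, since both the definition of refinement and the stated condition are phrased agent-by-agent; for a fixed $\agent$ I will freely identify the partial order $\partialorder_{\agent}$ (resp.\ $\partialorder'_{\agent}$) with its set of ordered pairs over $\agent$'s acceptable candidates. The only non-trivial ingredient I need is the standard fact that a finite strict partial order admits a linear extension, together with its refinement: if a partial order $\partialorder$ does not order $\candidate_1$ before $\candidate_2$, then $\partialorder$ has a linear extension placing $\candidate_2$ before $\candidate_1$, obtained by adjoining the pair $(\candidate_2,\candidate_1)$ to $\partialorder$ and taking a topological order of the transitive closure of the result.

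For the ``if'' direction I would assume the stated condition, which for each $\agent$ says exactly that $\partialorder_{\agent}\subseteq\partialorder'_{\agent}$. Then for any strict total order $L$ that is a linear extension of $\partialorder'_{\agent}$ we have $\partialorder_{\agent}\subseteq\partialorder'_{\agent}\subseteq L$, so $L$ is also a linear extension of $\partialorder_{\agent}$; since this holds for every agent, $\inputsetting'$ is a refinement of $\inputsetting$. This direction is immediate from the definition.

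For the ``only if'' direction I would argue the contrapositive. Suppose the condition fails, so for some agent $\agent$ and candidates $\candidate_1,\candidate_2$ acceptable to $\agent$ we have $(\candidate_1,\candidate_2)\in\partialorder_{\agent}$ but $(\candidate_1,\candidate_2)\notin\partialorder'_{\agent}$. The crucial point is that, since $\partialorder'_{\agent}$ is transitively closed, $(\candidate_1,\candidate_2)\notin\partialorder'_{\agent}$ rules out any directed path from $\candidate_1$ to $\candidate_2$ in $\partialorder'_{\agent}$; hence adjoining the pair $(\candidate_2,\candidate_1)$ creates no directed cycle, and I obtain a linear extension $L$ of $\partialorder'_{\agent}$ with $\candidate_2$ placed before $\candidate_1$, so $(\candidate_1,\candidate_2)\notin L$. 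Because $(\candidate_1,\candidate_2)\in\partialorder_{\agent}$, this $L$ is a linear extension of $\partialorder'_{\agent}$ that is not a linear extension of $\partialorder_{\agent}$, so $\inputsetting'$ is not a refinement of $\inputsetting$.

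The only step that requires any care is this last construction, namely checking that extending $\partialorder'_{\agent}$ by $(\candidate_2,\candidate_1)$ keeps the relation acyclic; this is precisely where transitivity of $\partialorder'_{\agent}$ is used, via the observation that a $\candidate_1$-to-$\candidate_2$ path would force $(\candidate_1,\candidate_2)\in\partialorder'_{\agent}$, contradicting our choice of $\candidate_1,\candidate_2$. Everything else is routine manipulation of set containments and the definition of linear extension. I am also implicitly using throughout that $\inputsetting$ and $\inputsetting'$ share, for each agent, the same set of acceptable candidates, so that ``linear extension'' refers to total orders on a common ground set.
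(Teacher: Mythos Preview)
Your proof is correct. The paper states this result as an observation and does not supply a proof, so there is nothing to compare against; your argument is the standard one and is exactly what is needed to justify the observation. The one point you flag as requiring care---that adjoining $(\candidate_2,\candidate_1)$ to $\partialorder'_{\agent}$ keeps the relation acyclic because transitivity of $\partialorder'_{\agent}$ rules out any $\candidate_1$-to-$\candidate_2$ path---is handled correctly, and your remark that both instances must share the same set of acceptable candidates per agent is a fair implicit assumption that the paper also relies on throughout.
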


A well studied special case of \smpi\ is the \emph{Stable Marriage problem with Ties and Incomplete lists (\smti)}, in which incomparability is transitive and is interpreted as indifference.
In \smti, each agent has a partition of acceptable candidates into \emph{indifference classes} or \emph{ties} such that he or she is indifferent between the candidates in the same indifference class, but has a strict preference ordering over the indifference classes. In an instance of \smti, let $\eqclass^{\agent}_t$ denote the $t$-th indifference class of agent $\agent$, where $t \in [\numofwomen]$. We assume that $\eqclass^{\agent}_t = \emptyset$ implies $\eqclass^{\agent}_{t'} = \emptyset$ for all $ t' >t$. The \emph{Stable Marriage problem with Incomplete lists (\smi)} is the special case of \smti\ in which each tie is of size one.  Similarly the \emph{Stable Marriage problem with Ties (\smt)} is the special case of \smti\ in which each man finds each woman acceptable and vice versa.  

Given an instance $I$ of \smpi, a \emph{matching} $\matching$ is a pairing of men and women such that each man is paired with at most one woman and vice versa, and no agent is matched to an unacceptable partner. If $\man$ and $\woman$ are matched in $\matching$ then $\matching(\man)=\woman$ and $\matching(\woman)=\man$. We say that $\matching(\agent) = \varnothing$ if $\agent$ is unmatched under $\matching$. Different levels of stability can be defined in the context of \smpi\ \cite{Irving94,manlove99}. A \emph{strong blocking pair} is an acceptable (man,woman) pair, each of whom is unmatched or strictly prefers the other to his/her partner.  A \emph{weakly stable matching} is a matching with no strong blocking pair. Every instance of \smpi\ admits a weakly stable matching \cite{Manlove-etal}.  An acceptable (man,woman) pair is a \emph{weak blocking pair} if each member of the pair is either unmatched or strictly prefers the other to his/her partner or cannot compare the other with his/her partner, and one member of the pair is either unmatched or strictly prefers the other to his/her partner.  A \emph{strongly stable matching} is a matching with no weak blocking pair.  Finally a \emph{very weak blocking pair} is an acceptable (man,woman) pair, each of whom is unmatched or strictly prefers to other to his/her partner or cannot compare the other with his/her partner. A \emph{super-stable matching} is a matching with no very weak blocking pair. It can be easily verified that a matching is super-stable if and only if it is weakly stable w.r.t.\ all strict total orders that are linear extensions of 
the given partial preference orderings \cite[Lemma 3.2.4]{manlove-book}.  In instances of \smi, weak stability, strong stability and super-stability are all equivalent to classical stability.


\subsection{Interviews to refine the partial orders}
In a given instance $\inputsetting=(\manset,\womanset,\partialorder_{\manset,\womanset})$ of \smpi\ in this paper, we assume that $\partialorder_{\manset,\womanset}$, the partial preference ordering profile, represents the agents' initial information state.  That is, agents may not have enough information initially in order to rank their acceptable candidates in strict order.  However in the problem instances that we will later define in this section, we will assume that each agent $\agent$ has a strict preference ordering $\pref_{\agent}$ over his or her acceptable candidates.  This represents the true (and strict) underlying preferences over $\agent$'s acceptable candidates, although crucially, $\agent$ may not (and in general will not) initially be aware of the entire ordering.  We let $\pref_{\manset,\womanset} =(\pref_{\man_1},\ldots,\pref_{\man_{\numofmen}},\pref_{\woman_1} ,\ldots,\pref_{\woman_{\numofwomen}})$ and call $\pref_{\manset,\womanset}$ a \emph{strict (true underlying) preference ordering profile}.  The task of the agents is to learn enough information about their acceptable candidates in order to refine their preferences, in a manner consistent with $\pref_{\manset,\womanset}$, to obtain an \smpi\ instance $I'$ that admits a super-stable matching $\matching$ (thus $\matching$ will be stable with respect to $\pref_{\manset,\womanset}$).

Following the model introduced in \cite{RCIL-ec13}, we assume that instances can be refined through \emph{interviews}. Each interview pairs one man $\man$ with one woman $\woman$.
An interview is informative to both parties involved. Hence saying ``$\man$ interviews $\woman$'' is equivalent to saying ``$\woman$ interviews $\man$''. When agent $\agent$ interviews $\ell$ candidates, this results in a new refined \smpi\ instance which is exactly the same as $\inputsetting$ except that $\agent$ now has a strict preference ordering over the $\ell$ interviewed candidates.

Notice that if an agent interviews only one candidate, no refinement takes place. Note also that not all refinements of $\inputsetting$ can be reached by a set of interviews. For example, suppose that in $\inputsetting$ we have one man $\man_1$ and three women $\woman_1$, $\woman_2$, and $\woman_3$. Suppose $\man_1$ finds the three women acceptable and incomparable. Assume that in $\inputsetting'$ man $\man_1$ prefers $\woman_1$ to both $\woman_2$ and $\woman_3$, and cannot compare $\woman_2$ and $\woman_3$. It is easy to see that $\inputsetting'$ is a refinement of $\inputsetting$, but no set of interviews can reach $\inputsetting'$: for $\man_1$ to learn that he prefers $\woman_1$ to the other two women he must interview all three women, but then he will have a strict preference ordering over the three of them. 

We say that an \smpi\ instance $\inputsetting'$ is an \emph{interview-compatible refinement} of an \smpi\ instance $\inputsetting$ if $\inputsetting'$ can be refined from $\inputsetting$ using interviews.
We now show that interview-compatible refinements can be recognized easily.
\begin{proposition}
Let $\inputsetting$ and $\inputsetting'$ be two instances of \smpi.  We can determine in $O(n^3)$ time whether $\inputsetting'$ is an interview-compatible refinement of $\inputsetting$.
\label{prop1}
\end{proposition}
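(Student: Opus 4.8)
\begin{proofsketch}
The plan is to reduce interview-compatibility to a handful of combinatorial tests on $\inputsetting$ and $\inputsetting'$. First I would check that $\inputsetting'$ is a refinement of $\inputsetting$ at all; by Observation~\ref{obs:refinement} this just amounts to verifying, for every agent $\agent$ and every ordered pair $(\candidate_1,\candidate_2)$ of candidates acceptable to $\agent$, that $(\candidate_1,\candidate_2)\in\partialorder_\agent$ implies $(\candidate_1,\candidate_2)\in\partialorder'_\agent$ --- $O(n)$ agents times $O(n^2)$ pairs, i.e.\ $O(n^3)$ time. If this fails we reject; otherwise $\partialorder_\agent\subseteq\partialorder'_\agent$ for every $\agent$ and we continue.

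Next I would reformulate the question in terms of interview sets. A set $S$ of interviews turns $\inputsetting$ into the instance in which each agent $\agent$ acquires a strict order over exactly the set $T_\agent$ of candidates it is paired with under $S$, its new partial order being the transitive closure of $\partialorder_\agent$ together with that strict order. Hence $\inputsetting'$ is reachable from $\inputsetting$ by interviews if and only if there is a family $\{T_\agent\}$ such that (i) $\woman\in T_\man \iff \man\in T_\woman$ for every man $\man$ and woman $\woman$ (interviews are symmetric); (ii) $\partialorder'_\agent$ restricted to $T_\agent$ is a strict total order; and (iii) the transitive closure of $\partialorder_\agent$ together with $\partialorder'_\agent|_{T_\agent}$ equals $\partialorder'_\agent$. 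Since $\partialorder_\agent\subseteq\partialorder'_\agent$, condition (iii) holds automatically in the ``$\subseteq$'' direction, so it really only asks that every arc of $\partialorder'_\agent\setminus\partialorder_\agent$ be regenerated.

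The heart of the argument is to identify, for a single agent $\agent$, which candidates are forced into $T_\agent$. I claim the inclusion-minimal feasible set (when one exists) is $K_\agent$, the set of endpoints of the \emph{new covering pairs} of $\partialorder'_\agent$: those pairs $(\candidate_1,\candidate_2)\in\partialorder'_\agent\setminus\partialorder_\agent$ admitting no $\candidate_3$ with $(\candidate_1,\candidate_3)\in\partialorder'_\agent$ and $(\candidate_3,\candidate_2)\in\partialorder'_\agent$. Sufficiency: writing any new arc as a saturated chain of covering pairs of $\partialorder'_\agent$ and using its first new link, one checks that interviewing $K_\agent$ already regenerates every new arc through the transitive closure with $\partialorder_\agent$. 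Necessity: a short case analysis on a new covering pair $(\candidate_1,\candidate_2)$ shows that the transitive closure of $\partialorder_\agent$ with $\partialorder'_\agent|_{T_\agent}$ can contain $(\candidate_1,\candidate_2)$ only if $\candidate_1,\candidate_2\in T_\agent$. Thus a feasible $T_\agent$ exists iff the members of $K_\agent$ are pairwise comparable in $\partialorder'_\agent$, in which case $T_\agent=K_\agent$ works. Folding in symmetry --- agent $\agent$ is also forced to interview any candidate $\candidate$ with $\agent\in K_\candidate$ --- I would set $\hat T_\agent = K_\agent\cup\{\candidate : \agent\in K_\candidate\}$, which one verifies is already closed under this forcing. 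The resulting characterisation is: $\inputsetting'$ is an interview-compatible refinement of $\inputsetting$ iff $\inputsetting'$ is a refinement of $\inputsetting$ and, for every agent $\agent$, the elements of $\hat T_\agent$ are pairwise comparable in $\partialorder'_\agent$. For ``if'', taking $T_\agent=\hat T_\agent$ for every $\agent$ yields a symmetric family satisfying (i)--(iii); for ``only if'', any realising family would have to contain the non-chain $\hat T_\agent$ inside $T_\agent$, violating (ii).

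On running time, computing the covering relation of each $\partialorder'_\agent$ (a transitive reduction), subtracting $\partialorder_\agent$ to obtain $K_\agent$, assembling the sets $\hat T_\agent$, and testing pairwise comparability within each $\hat T_\agent$ in $\partialorder'_\agent$ all fit within $O(n^3)$. The step I expect to be the main obstacle is the structural claim that $K_\agent$ is exactly the forced interview set: one has to argue carefully that transitivity neither sneaks extra candidates into $T_\agent$ --- a candidate whose comparabilities change only by inference (e.g.\ $\agent$ already knew $\candidate_1\strictlypref\candidate_2$ and, after interviewing $\candidate_2$ against a new $\candidate_3$, acquires the derived relation $\candidate_1\strictlypref\candidate_3$ for free) need not be interviewed --- nor lets one avoid interviewing the two endpoints of a genuinely new covering pair.
\end{proofsketch}
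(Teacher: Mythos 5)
Your reduction to the existence of a symmetric family of per-agent interview sets is sensible, and one ingredient of it is a genuinely careful touch: the closure $\hat T_\agent = K_\agent\cup\{\candidate : \agent\in K_\candidate\}$, which accounts for interviews forced on an agent by \emph{other} agents' needs, a forcing that the paper's own argument (a per-agent test that the endpoints of new arcs be pairwise comparable in $\partialorder'_\agent$) does not even mention. The central problem, however, is that you have changed the interview semantics. You take the post-interview order to be the \emph{transitive closure} of $\partialorder_\agent$ with the strict order on the interviewed set, so that a comparison like $w_1\succ w_3$ can be acquired ``for free'' from a known $w_1\succ w_2$ and a learned $w_2\succ w_3$; this is exactly what justifies shrinking the forced set to the endpoints $K_\agent$ of new \emph{covering} arcs. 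The paper's model is the opposite: an agent's comparison between two candidates changes only if the agent interviewed both of them --- this is precisely the premise used in the proof of Proposition~\ref{prop2} (``if $\agent$ cannot compare $\candidate_1$ and $\candidate_2$ under $\inputsetting$ but prefers one to the other under $\inputsetting'$, $\agent$ must have interviewed both'') and in the NP-hardness reductions. Under that model the forced set is the set of endpoints of \emph{all} new arcs, and the two recognition predicates genuinely differ. For instance, take $\partialorder_\agent=\{(w_4,w_1),(w_5,w_2)\}$ and let $\partialorder'_\agent$ be the transitive closure obtained after $\agent$ learns $w_1\succ w_2\succ w_3$: your test accepts ($K_\agent=\{w_1,w_2,w_3\}$ is a chain), but under the paper's model the new arcs $(w_4,w_2)$ and $(w_5,w_3)$ force $\agent$ to have interviewed $w_4$ and $w_5$, which are incomparable in $\partialorder'_\agent$, so this $\inputsetting'$ is \emph{not} interview-compatible. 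So, even if your characterisation is internally consistent, it decides a different property from the one defined in the paper, and it would also contradict the cost accounting of Proposition~\ref{prop2}.

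There is a secondary issue with the complexity claim even within your own model: extracting the new covering pairs requires testing, for each of up to $\Theta(n^2)$ new arcs of each of the $2n$ agents, whether an intermediate element exists --- a transitive-reduction-style computation costing $O(n^3)$ per agent, i.e.\ $O(n^4)$ overall (or Boolean matrix-multiplication time), so ``all fit within $O(n^3)$'' is not justified. The paper's route avoids this entirely: under its semantics one simply reads off the endpoints of new arcs ($O(n^2)$ per agent) and checks their pairwise comparability in $\partialorder'_\agent$ ($O(n^2)$ per agent), giving $O(n^3)$ in total. If you rework your argument with the paper's semantics, your symmetry closure $\hat T_\agent$ applied to the full new-arc endpoint sets (rather than to covering-pair endpoints) is exactly the strengthening one would want, and it stays within the $O(n^3)$ bound.
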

\begin{proof}
To verify whether $\inputsetting'$ is a refinement of $\inputsetting$, it is sufficient to check whether the condition of Observation~\ref{obs:refinement} holds. With a suitable data structure, we can do this in $O(n^3)$ time. 
For each agent $\agent$ identify the edges present in $\partialorder'_{\agent}$ that are not in $\partialorder_{\agent}$, and let $S(\agent)$ be the set of candidates in $\partialorder'_{\agent}$ that form an endpoint of at least one such edge.
For $\inputsetting'$ to be an interview-compatible refinement of $\inputsetting$, it is necessary and sufficient that, for every $\agent$, $S(\agent)$ forms a complete subgraph in the undirected graph corresponding to $\partialorder'_{\agent}$. This can be tested in $O(n^3)$ time overall.
\end{proof}

Let $\inputsetting'$ be an \smpi\ instance that is an interview-compatible refinement of a given \smpi\ instance $\inputsetting$. We define the \emph{cost} of $\inputsetting'$ given $\inputsetting$ to be the minimum number of interviews required to refine $\inputsetting$ into $\inputsetting'$.  The following proposition shows how to compute this cost efficiently.
\begin{proposition}
Let $\inputsetting$ be an \smpi\ instance and let $\inputsetting'$ be an interview-compatible refinement of $\inputsetting$.  We can determine in $O(n^3)$ time the cost of $\inputsetting'$ given $\inputsetting$.
\label{prop2}
\end{proposition}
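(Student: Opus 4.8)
The plan is to reformulate the cost as a minimum over bipartite graphs subject to a local constraint at each agent, to solve that constraint at a single agent, and then to glue the solutions. A schedule of interviews is a set $E$ of (man, woman) pairs, the number of interviews is $|E| = \sum_{\man}|N_E(\man)|$, where $N_E(\agent)$ denotes the set of candidates that $\agent$ has interviewed, and I would first establish that conducting $E$ refines $\inputsetting$ into $\inputsetting'$ if and only if, for every agent $\agent$, the set $N_E(\agent)$ is a \emph{valid interview set for $\agent$}, meaning that $\partialorder'_{\agent}$ restricts to a strict total order on $N_E(\agent)$ and that the transitive closure of $\partialorder_{\agent}$ together with that total order equals $\partialorder'_{\agent}$. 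One direction is routine; for the other, we are free to take the true underlying preferences to be any strict refinement of $\partialorder'_{\manset,\womanset}$, so interviewing $N_E(\agent)$ produces exactly the restriction of $\partialorder'_{\agent}$ to $N_E(\agent)$. Thus the cost of $\inputsetting'$ given $\inputsetting$ is the least $|E|$ over all $E$ all of whose vertex neighbourhoods are valid interview sets.

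The core of the argument is that every agent $\agent$ has a \emph{unique} inclusion-minimal valid interview set $T^\star(\agent)$, which moreover is contained in every valid interview set for $\agent$. Since $\inputsetting'$ is an interview-compatible refinement, the set $S(\agent)$ from the proof of Proposition~\ref{prop1} --- the endpoints of the edges of $\partialorder'_{\agent}$ absent from $\partialorder_{\agent}$ --- is a chain $\candidate_1 \strictlypref \cdots \strictlypref \candidate_k$ of $\partialorder'_{\agent}$, and every new edge runs between two of the $\candidate_i$. I would take $T^\star(\agent)$ to be the union of the pairs $\{\candidate_i,\candidate_{i+1}\}$ for which $(\candidate_i,\candidate_{i+1})$ is a new edge. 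Each such pair is a covering edge of the partial order $\partialorder'_{\agent}$ (no candidate lies strictly between $\candidate_i$ and $\candidate_{i+1}$: one in $S(\agent)$ cannot, by the chain order, and one outside $S(\agent)$ would, not being an endpoint of any new edge, force $(\candidate_i,\candidate_{i+1})$ itself to lie already in $\partialorder_{\agent}$), and a covering edge of a partial order cannot be recovered by transitivity from a path of length at least two; hence any interview set that regenerates it must contain both of its endpoints, so $T^\star(\agent)$ is contained in every valid interview set. Conversely $T^\star(\agent)$ is a sub-chain of $S(\agent)$, and every new edge $(\candidate_i,\candidate_j)$ with $i<j$ factors through the consecutive pairs $(\candidate_\ell,\candidate_{\ell+1})$, $i \le \ell < j$, each of which is either already in $\partialorder_{\agent}$ or has both endpoints in $T^\star(\agent)$; so interviewing $T^\star(\agent)$ recovers $\partialorder'_{\agent}$ and $T^\star(\agent)$ is valid. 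Computing $T^\star(\agent)$ requires only comparing $\partialorder_{\agent}$ with $\partialorder'_{\agent}$ and scanning the chain $S(\agent)$, which is $O(n^2)$ per agent and $O(n^3)$ over all $2\numofmen$ agents.

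To finish, set $E_0 = \{\{\man,\woman\} : \woman \in T^\star(\man) \text{ or } \man \in T^\star(\woman)\}$. Because each $T^\star(\agent)$ lies inside every valid interview set for $\agent$, every schedule $E$ that refines $\inputsetting$ into $\inputsetting'$ contains $E_0$, so the cost is at least $|E_0|$. Conversely, interview-compatibility supplies at least one such schedule $E$; for it, and for every agent $\agent$, the set $N_{E_0}(\agent) \subseteq N_E(\agent)$ is a sub-chain of $\partialorder'_{\agent}$ that contains $T^\star(\agent)$, hence is itself a valid interview set, so $E_0$ is also a valid schedule and the cost is at most $|E_0|$. Therefore the cost equals $|E_0|$, which can be counted in $O(n^2)$ time once all the sets $T^\star(\agent)$ have been found, giving the claimed $O(n^3)$ bound. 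I expect the main obstacle to be the structural claim of the middle paragraph: identifying the right minimal valid interview set and, crucially, seeing that it is the interview-compatibility of $\inputsetting'$ (through Proposition~\ref{prop1}) that forces the endpoints collected into $T^\star(\agent)$ to form a chain, so that the final gluing step goes through.
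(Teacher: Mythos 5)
There is a genuine gap, and it is in your very first step: the reformulation of ``conducting $E$ refines $\inputsetting$ into $\inputsetting'$'' in terms of the \emph{transitive closure} of $\partialorder_{\agent}$ together with the learned order. That is not the interview model of this paper. Here, carrying out interviews produces an instance that is exactly $\inputsetting$ except that each agent now strictly orders precisely the candidates s/he interviewed; in particular, an agent can compare two candidates in $\inputsetting'$ only if they were already comparable in $\inputsetting$ or s/he interviewed \emph{both} of them. This is exactly the fact the paper's proof rests on, and it forces agent $\agent$'s interview neighbourhood to be the whole endpoint set $S(\agent)$, not your smaller set $T^\star(\agent)$; the cost is then simply the number of distinct pairs $\{\agent,\candidate\}$ with $\candidate\in S(\agent)$, which the paper counts directly in $O(n^3)$ time. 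Concretely, suppose man $\agent$ has acceptable women $\candidate_1,\candidate_2,\candidate_3$, with $\partialorder_{\agent}=\{(\candidate_3,\candidate_2)\}$ and $\partialorder'_{\agent}$ the chain $\candidate_1\succ\candidate_3\succ\candidate_2$ (so $(\candidate_1,\candidate_2)\in\partialorder'_{\agent}$), all other agents' lists unchanged. This $\inputsetting'$ is interview-compatible (by Proposition \ref{prop1}, since $S(\agent)=\{\candidate_1,\candidate_2,\candidate_3\}$ is a chain), and since $(\candidate_1,\candidate_2)$ is newly comparable, $\agent$ must also have interviewed $\candidate_2$: the cost is $3$, whereas your $T^\star(\agent)=\{\candidate_1,\candidate_3\}$ yields $2$. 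Interviewing only $\{\candidate_1,\candidate_3\}$ produces the relation $\{(\candidate_3,\candidate_2),(\candidate_1,\candidate_3)\}$, which is not $\partialorder'_{\agent}$, so your algorithm undercounts the cost as defined here.

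There is also an internal tension in your argument: you invoke Proposition \ref{prop1} to conclude that $S(\agent)$ is a chain of $\partialorder'_{\agent}$, but the characterisation in Proposition \ref{prop1} presupposes the same ``no transitive inference'' semantics that your closure-based equivalence discards. Under your semantics that characterisation is false: with $\partialorder_{\agent}=\{(x,n_1),(u,n_1)\}$, interviewing $\{n_1,n_2\}$ and closing transitively reaches a refinement in which $x$ and $u$ are endpoints of new edges yet remain incomparable, so $S(\agent)$ is not a clique. So you cannot consistently use both. Your covering-edge analysis and the gluing step are a correct and rather delicate computation of a \emph{different} quantity -- the minimum number of interviews when agents may additionally infer comparisons transitively from prior knowledge -- but that is not the cost this proposition (and the rest of the paper, e.g.\ the reduction in Theorem \ref{ICR-np-hard}) is about. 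Under the paper's model the whole difficulty you anticipate disappears: necessity of interviewing all of $S(\agent)$ is immediate, and the resulting set of pairs is itself an admissible schedule, giving the cost in one pass.
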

\begin{proof}
We identify the set of interviews $\interviewset$ that refines $\inputsetting$ into $\inputsetting'$ as follows. Initially $\interviewset=\emptyset$.
For each agent $\agent$ and every two candidates $\candidate_1$ and $\candidate_2$, 
if $\agent$ cannot compare $\candidate_1$ and $\candidate_2$ under $\inputsetting$, but prefers one to the other under $\inputsetting'$, $\agent$ must have interviewed both $\candidate_1$ and $\candidate_2$. Add both of these interviews to $\interviewset$. Notice that we might have already accounted for one or both of these interviews. However since $\interviewset$ is a set, no interview is going to be included in $\interviewset$ more than once. With a suitable data structure, the aforementioned procedure terminates in $O(\numofmen^3)$ time overall, and once it does, $|\interviewset|$ denotes the cost of $\inputsetting'$.
\end{proof}
%


\subsection{Definition of interview minimizing problems}
The motivating problem is as follows: given an instance $\inputsetting=(\manset,\womanset,\partialorder_{\manset,\womanset})$ of \smpi, find an interview-compatible refinement $\inputsetting'$ of minimum cost such that $\inputsetting'$ admits a super-stable matching.  Since the result of one interview might influence which interview/s to carry out next, any strategy for carrying out interviews should be regarded as an online algorithm.  

In fact there may be no online algorithm that can guarantee to provide an optimal solution in all cases.  To see this, let us return to the example of Section \ref{sec:intro} involving two residents and two hospitals, and suppose that initially everyone finds the two agents on the other side of the market acceptable and incomparable.  Let the true underlying preferences be given by Figure \ref{fig:example} (here, preference lists are ordered from left to right in decreasing order of preference).
\begin{figure}[t]
\[
\begin{array}{rlrl}
r_1: & h_1 ~ h_2 ~~~~~~ & h_1: & r_2 ~ r_1\\
r_2: & h_2 ~ h_1 ~~~~~~ & h_2: & r_2 ~ r_1
\end{array}
\]
\caption{A stable marriage problem instance.}
\label{fig:example}
\end{figure}
Any online interviewing strategy must start with one interview in the absence of any knowledge; w.l.o.g.\ suppose that the first interview involves $r_1$ and $h_1$.  Then it may be verified that the algorithm is bound to use 3 more interviews before a super-stable matching can be found. (If one interview does not take place then two agents on opposite sides of the market cannot compare the two candidates in their preference list; it then follows that each of the two possible matchings would be blocked according to super-stability.)
On the other hand the interview involving $r_1$ and $h_1$ was unnecessary and an optimal strategy uses only 3 interviews.

Towards computing bounds for the competitive ratio of an online algorithm, the offline scenario is of interest, and that is what we consider in what follows.  In the offline case, the mechanism designer is given $\pref_{\manset,\womanset}$, the strict (true underlying) preference ordering profile of the agents,
and would like to compute an optimal interviewing schedule, i.e., an interview-compatible refinement $\inputsetting'$ of $\inputsetting$, such that $\pref_{\manset,\womanset}$ refines $\inputsetting'$. This is reflected in the definition of the following problem, named \minicr, which is an abbreviation for ``Minimum-cost Interview Compatible Refinement problem''.

\begin{definition}\label{def:minicr}
An instance of \minicr\ comprises a tuple $(I,\pref_{\manset,\womanset})$, where $I$ is an instance of \smpi\ and $\pref_{\manset,\womanset}$ is a strict preference ordering profile that refines $I$.  The problem is to find an interview-compatible refinement $\inputsetting'$ of $\inputsetting$ such that (i) $\pref_{\manset,\womanset}$ refines $\inputsetting'$, (ii) $\inputsetting'$ admits a super-stable matching, and (iii) $\inputsetting'$ is of minimum cost amongst interview-compatible refinements that satisfy (i) and (ii). 
\end{definition}

The decision version of \minicr\ is defined as follows.
\begin{definition}\label{def:icr-dec}
An instance of \icr\ comprises a tuple $(I,\pref_{\manset,\womanset},K)$, where $I$ is an instance of \smpi, $\pref_{\manset,\womanset}$ is a strict preference ordering profile that refines $I$, and $K$ is a non-negative integer.  The problem is to decide whether there exists an interview-compatible refinement $\inputsetting'$ of $\inputsetting$, with cost at most $K$, such that $\pref_{\manset,\womanset}$ refines $\inputsetting'$ and $\inputsetting'$ admits a super-stable matching.
\end{definition}

As discussed in Section \ref{sec:intro}, it is sometimes the case that we aim for a particular matching, stable under $\pref_{\manset,\womanset}$, that has some desirable properties, for example the woman-optimal stable matching.  The offline problem can then be
viewed as a restricted variant of \minicr\ where, in addition to $\inputsetting$ and $\pref_{\manset,\womanset}$, we are also equipped with a matching $\matching$. This is reflected in the definition of the following problem, named \minicrx, which is an abbreviation for ``Minimum-cost Interview Compatible Refinement problem with Exact matching''.

\begin{definition}\label{def:minicr-exact}
An instance of \minicrx\ comprises a tuple $(\inputsetting,\pref_{\manset,\womanset},\matching)$, where $\inputsetting$ is an instance of \smpi, $\pref_{\manset,\womanset}$ is a strict preference ordering profile that refines $\inputsetting$, and $\matching$ is a matching that is weakly stable w.r.t.\ $\pref_{\manset,\womanset}$. The problem is to find an interview-compatible refinement $\inputsetting'$ of $\inputsetting$, such that (i) $\pref_{\manset,\womanset}$ refines $\inputsetting'$, (ii) $\matching$ is super-stable in $\inputsetting'$, and (iii) $\inputsetting'$ is of minimum cost amongst interview-compatible refinements of $\inputsetting$ that satisfy (i) and (ii).
\end{definition}

The decision version of \minicrx, called \icrx, is then defined analogously to the way that \icr\ was obtained from the definition of \minicr.

The remainder of this paper is organized as follows.  In Section \ref{sec:npc} we first show that \icr\ is NP-complete even under quite restricted settings. The proof is by reduction from Vertex Cover. We also leverage the same proof to show that \icrx\ is also NP-complete.
Then in Section \ref{sec:poly} we provide a reverse reduction, from Vertex Cover to \icrx, and utilize it to show that \minicrx\ is polynomial-time solvable for several restricted settings.  Some concluding remarks are presented in Section \ref{sec:conc}.

\section{NP-completeness results}
\label{sec:npc}
We show that \icr\ is NP-complete even if $\inputsetting$ is an instance of \smti\ in which each indifference class is of size at most 3. Further, we prove that \icr\ is NP-complete even for \smt\ instances, and even if all men are indifferent between all women.  We first provide a lemma that will come in handy in proving our claims; the proof is straightforward, and is omitted.
\hide{
\begin{lemma}\label{lem:is-po}
We can decide in polynomial time whether a given $\partialorder_{\manset,\womanset}$ is a partial order.
\end{lemma}
\begin{proof}
We need to verify whether $\partialorder_{\manset,\womanset}$ is irreflexive, asymmetric, and transitive. The first two can be verified easily in $O(n^2)$ by checking each agent. Verifying transitivity can be done in $O(n^4)$ time, by e.g., using an algorithm to compute the transitive closure of $\partialorder_{\manset,\womanset}$. $\partialorder_{\manset,\womanset}$ is transitive if and only if it is the same as its transitive closure.
\end{proof}
}
\begin{lemma}\label{lem:digraph}
Let $G=(V,E)$ be an undirected graph where for each vertex $v$, $deg(v) \leq 3$. We can direct the edges in $E$ such that for each $v$, $deg^+(v)\leq 2$ and $deg^-(v)\leq 2$.
\end{lemma}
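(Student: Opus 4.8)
The plan is to reduce the statement to the classical fact that a connected graph all of whose degrees are even has an Eulerian circuit. First I would observe that the constraint is only binding at vertices of degree exactly $3$: if $deg(v)\le 2$ then for \emph{any} orientation of $E$ we automatically have $deg^+(v)\le 2$ and $deg^-(v)\le 2$, so it suffices to guarantee the bounds at the degree-$3$ vertices.

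Next I would build an auxiliary undirected graph $G'$ from $G$ by adding a single new vertex $u$ and joining it to every vertex of $V$ that has odd degree in $G$. In $G'$ every original vertex has even degree: a degree-$1$ vertex of $G$ becomes degree $2$, a degree-$3$ vertex becomes degree $4$, and even-degree vertices are unchanged. Moreover $u$ itself has even degree, since the number of odd-degree vertices of $G$ is even by the handshake lemma. Hence every connected component of $G'$ is a connected graph in which all vertices have even degree, and therefore admits an Eulerian circuit.

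Then I would orient each edge of $G'$ in the direction in which a fixed Eulerian circuit of its component traverses it. Since such a circuit enters and leaves each vertex equally often, the resulting orientation satisfies $deg^+_{G'}(x)=deg^-_{G'}(x)=deg_{G'}(x)/2$ at every vertex $x$. Restricting this orientation to the edges of $G$ (that is, discarding the edges incident with $u$) yields the desired orientation of $E$: for a vertex $v$ of degree $3$ in $G$ we have $deg_{G'}(v)=4$, hence $deg^+_{G'}(v)=deg^-_{G'}(v)=2$, and deleting the unique edge $uv$ decreases exactly one of $deg^+(v)$, $deg^-(v)$ by one, leaving both at most $2$; vertices of smaller degree in $G$ satisfy the bounds by the remark above (or directly, by the same reasoning).

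There is no substantial obstacle here; the argument is the standard Euler-tour trick. The only points that need care are checking that $u$ has even degree (so that each component of $G'$ is genuinely Eulerian) and the short case analysis at degree-$3$ vertices after the edge to $u$ is removed.
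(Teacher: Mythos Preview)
Your argument is correct. The Euler-tour trick works exactly as you describe: after adding the auxiliary vertex $u$, every component of $G'$ has all even degrees, the Eulerian orientation balances in- and out-degree at every vertex, and deleting the edges to $u$ costs at most one unit from one side at each odd-degree vertex of $G$, so degree-$3$ vertices end up with $\{1,2\}$ on the two sides.

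This is a genuinely different route from the paper's (commented-out) proof, which proceeds by induction on $|V|$. There, if some vertex has degree at most $2$ one removes it, orients the rest inductively, and reinserts it; if every vertex has degree exactly $3$, one removes an arbitrary vertex $v$, orients inductively, and then performs a small case analysis on the in/out patterns of $v$'s three neighbours---possibly reversing the edges along a directed path to flip one neighbour's pattern---before orienting the three reinstated edges. Your approach is cleaner and more conceptual: it avoids the case analysis entirely, and it immediately generalises to arbitrary maximum degree $\Delta$, yielding $deg^+(v),deg^-(v)\le\lceil deg(v)/2\rceil$ for every $v$. The paper's inductive argument, by contrast, is tailored to the bound $3$ (the case analysis in the all-cubic scenario relies on there being exactly three neighbours), though it has the minor virtue of building the orientation vertex by vertex without appealing to Euler's theorem.
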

\hide{
\begin{proof}
%
%
We prove this by induction on the number of vertices. It is obvious that the claim holds for $|V|=1$. Assume that $n>1$ and the claim holds for all graphs for which $|V| < n$. We show that it holds for any graph in which $|V|=n$. We consider two scenarios.

\emph{Scenario 1: There exists $v \in V$ where $deg(v)\leq 2$.} Remove $v$ and edges incident to it from $G$, arriving at $G^{n-1}$. By induction, we can direct the edges in $G^{n-1}$ so as to satisfy the claim's requirement. We now add $v$ and its incident edges back in, and direct these edges so as not to violate the claim's requirement. Note that we can easily do this as $deg(v)\leq 2$ and the degrees of vertices adjacent to $v$ are at most two as well under $G^{n-1}$. 

\emph{Scenario 2: All vertices in $V$ are of degree equal to 3.} Arbitrarily pick a $v$. Denote the neighbors of $v$ by $v_1$, $v_2$, and $v_3$.  Remove $v$ from $G$ arriving at $G^{n-1}$. By induction, we can direct the edges in $G^{n-1}$ so as to satisfy the claim's requirement. We now add $v$ and its incident edges back in. 
\begin{itemize}
	\item Case 1: If at least one of the neighbors of $v$ has exactly one incoming edge and one outgoing edge under $G^{n-1}$ -- that is, it is of the form $\nearrow^{\bullet}\searrow$ -- then we can easily find the proper direction for the edges incident to $v$ so as to satisfy the claim's requirement.  
	\item Case 2: If at least one of the neighbors is of the form $\nearrow^{\bullet}\nwarrow$ and at least one of the form $\nwarrow_{\bullet}\nearrow$ then again we can easily find the proper direction for the edges incident to $v$ so as to satisfy the claim's requirement. 
	\item Otherwise, either all neighbors are of the form $\nearrow^{\bullet}\nwarrow$ or all of the form $\nwarrow_{\bullet}\nearrow$. We provide the analysis for one of these cases; the analysis for the other case is similar. Assume that $v_1$, $v_2$, and $v_3$ are of the form $\nwarrow_{\bullet}\nearrow$. Pick one of them arbitrarily, say $v_1$. Take an outgoing edge of $v_1$, say $(v_1,v_4)$. Note that $deg(v_4)=3$. If $v_4$ has another incoming edge, besides $(v_1,v_4)$, reverse the edge between $v_1$ and $v_4$ to get $(v_4,v_1)$ (an outgoing edge from $v_4$ to $v_1$). We are now in Case 1. Otherwise, the two other edges of $v_4$ are both outgoing. Again pick an outgoing edge of $v_4$ arbitrarily, and continue this process.
Note that we can not revisit the same vertex twice, as all vertices visited on the path have exactly one incoming edge (except for $v_1$ which has no incoming edge). Therefore, we must eventually reach a vertex $v'$ with two incoming edges in which case we reverse the direction of all the edges traversed so far and will end up in Case 1. 
\end{itemize}
\end{proof}
}

Unlike many problems that are NP-complete, the membership of \icr\ in the class NP is not trivial. Hence, we provide a proof via the following lemma.
\begin{lemma}\label{ICR-in-np}
\icr\ is in NP.
\end{lemma}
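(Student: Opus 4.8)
The plan is to show that every yes-instance of \icr\ has a polynomial-size certificate that can be verified in deterministic polynomial time. The certificate will be a pair $(\inputsetting',\matching)$, where $\inputsetting'$ is an interview-compatible refinement of $\inputsetting$ with cost at most $K$ such that $\pref_{\manset,\womanset}$ refines $\inputsetting'$, and $\matching$ is a super-stable matching in $\inputsetting'$. The key observation for the size bound is that, although one might be tempted to certify by a list of interviews, the refined instance $\inputsetting'$ itself is already a succinct object: it is an \smpi\ instance on the same $2n$ agents, and each agent's partial order is described by its $O(n^2)$ arcs, so the whole certificate has size $O(n^3)$. The nondeterministic machine therefore guesses $(\inputsetting',\matching)$ and then runs the deterministic verification below.

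Given a guessed pair $(\inputsetting',\matching)$, the verifier performs the following polynomial-time checks, one for each requirement in Definition~\ref{def:icr-dec}. (1) Confirm that $\inputsetting'$ is a legitimate \smpi\ instance, i.e.\ that each $\partialorder'_\agent$ is a strict partial order; this is routine (computing a transitive closure and comparing suffices). (2) Confirm, using the characterization in Proposition~\ref{prop1}, that $\inputsetting'$ is an interview-compatible refinement of $\inputsetting$, in $O(n^3)$ time. (3) Compute the cost of $\inputsetting'$ given $\inputsetting$ using Proposition~\ref{prop2}, again in $O(n^3)$ time, and confirm it is at most $K$. (4) Confirm that $\pref_{\manset,\womanset}$ refines $\inputsetting'$: by Observation~\ref{obs:refinement} it suffices to check, for every agent $\agent$, that each arc of $\partialorder'_\agent$ also appears in the strict total order $\pref_\agent$. (5) Confirm that $\matching$ is a matching of $\inputsetting'$ that assigns no agent an unacceptable partner, and that $\matching$ admits no very weak blocking pair in $\inputsetting'$; this amounts to inspecting all $O(n^2)$ (man,woman) pairs and testing the blocking condition against each party's preferences in $\inputsetting'$. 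If all checks succeed, accept; otherwise reject.

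Correctness is then straightforward. If the machine accepts some guess, the guessed $\inputsetting'$ is exactly an object of the kind required by Definition~\ref{def:icr-dec}, so the input is a yes-instance. Conversely, if the input is a yes-instance, Definition~\ref{def:icr-dec} furnishes a refinement $\inputsetting'$ with all the required properties, and taking $\matching$ to be any super-stable matching of $\inputsetting'$ yields an accepting certificate. The one place where care is needed — and the reason membership in NP is not entirely immediate — is step (3): from a bare refinement it is not obvious how to recover the minimum number of interviews that realizes it, and a certificate built from raw interviews would have to separately rule out ``empty'' interviews that refine nothing. Propositions~\ref{prop1} and~\ref{prop2} are precisely what make this step efficient, so the proof is essentially an assembly of those two results together with the elementary polynomial-time checkability of the refinement and super-stability conditions.
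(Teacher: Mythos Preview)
Your proof is correct and essentially assembles the same ingredients as the paper: Propositions~\ref{prop1} and~\ref{prop2} for interview-compatibility and cost, Observation~\ref{obs:refinement} for the refinement check, and a polynomial-time test for super-stability.

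The one genuine difference is in how super-stability of $\inputsetting'$ is established. The paper guesses only $\inputsetting'$ as certificate and, for the super-stability check, invokes the polynomial-time algorithm SUPER-SMP of~\cite{RCIIL-ec14} to decide whether $\inputsetting'$ admits a super-stable matching. You instead enlarge the certificate to include a witness matching $\matching$ and verify super-stability directly by scanning all $O(n^2)$ acceptable pairs for very weak blocking. Your route is more self-contained---it avoids relying on an external algorithmic result and keeps the argument entirely elementary---at the modest price of a slightly larger certificate. The paper's route gives a leaner certificate but imports a nontrivial algorithm as a black box. Either approach is perfectly valid for NP membership.
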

\begin{proof}
To prove this, it is sufficient to show that given \smpi\ instances $\inputsetting$ and $\inputsetting'$, a strict preference profile $\pref_{\manset,\womanset}$ and an integer $K$, we can decide in polynomial time whether (i) $\inputsetting'$ is an interview-compatible refinement of $\inputsetting$, (ii) $\inputsetting'$ has cost at most $K$, (iii) $\pref_{\manset,\womanset}$ refines $\inputsetting'$, and (iv) $\inputsetting'$ admits a super-stable matching. 

Both (i) and (ii) are established by Propositions \ref{prop1} and \ref{prop2} respectively.
For (iii), it is straightforward to check in polynomial time whether $\pref_{\manset,\womanset}$ refines $\inputsetting'$. 

Finally to establish (iv), we can use the polynomial time algorithm of \cite{RCIIL-ec14}, SUPER-SMP, to decide whether $\inputsetting'$ admits a super-stable matching or not.
\end{proof}

We show that \icr\ is NP-hard by reducing from the decision version of the Vertex Cover problem (\vc). \vc\ is defined as follows: given a graph $G=(V,E)$ and an integer $K$, decide whether $G$ admits a vertex cover of size at most $K$. \vc\ is NP-complete even if each vertex has degree at most $3$ \cite{GJ77}; let \vct\ denote this restriction. We denote by \minvc\ the optimization version of \vc, that is the problem of finding a minimum vertex cover in a given graph $G$.

\begin{theorem}\label{ICR-np-hard}
\icr\ is NP-complete even for \smti\ instances in which each indifference class has size \mbox{at most $3$}.
\end{theorem}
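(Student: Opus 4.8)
The membership of \icr\ in NP is already established by Lemma~\ref{ICR-in-np}, so the remaining task is NP-hardness, which I would prove by a polynomial reduction from \vct\ (vertex cover restricted to graphs of maximum degree $3$). Given an instance $(G,K)$ of \vct\ with $G=(V,E)$, the first step is to apply Lemma~\ref{lem:digraph} to fix an orientation of $G$ in which every vertex has in-degree and out-degree at most $2$; this orientation is precisely what will let every indifference class in the constructed instance have size at most $3$ (one ``canonical'' entry together with at most two entries corresponding to oriented incident edges).

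From the oriented graph I would build an \smti\ instance $\inputsetting$ together with a strict profile $\pref_{\manset,\womanset}$ refining it. Each vertex $v$ contributes a small gadget of agents --- in the simplest form a man $\man_v$ and a woman $\woman_v$, possibly with a constant number of auxiliary (padding) agents whose only role is to make the cost bookkeeping uniform --- and there is a designated matching $\matching$, pairing each vertex gadget internally (e.g.\ $\man_v$ with $\woman_v$), that will play the role of the super-stable matching in any cheap refinement. The true profile $\pref_{\manset,\womanset}$ is chosen so that every agent ranks its $\matching$-partner strictly first, making $\matching$ stable (in the strict sense) with respect to $\pref_{\manset,\womanset}$, whereas in $\inputsetting$ each agent is indifferent between its $\matching$-partner and its remaining acceptable candidates. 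Acceptability is kept deliberately sparse: for each oriented edge $u\to v$ I make $\man_u$ and $\woman_v$ acceptable to one another (and to no one outside their gadgets), so that the only very weak blocking pairs of $\matching$ in $\inputsetting$ are the pairs $(\man_u,\woman_v)$, one per edge. By Observation~\ref{obs:refinement} no refinement can create a blocking pair among previously incomparable or previously unacceptable agents, so these are exactly the pairs that must be destroyed.

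The crux of the reduction is a lemma stating the following correspondence: a very weak blocking pair $(\man_u,\woman_v)$ is destroyed in an interview-compatible refinement $\inputsetting'$ that is refined by $\pref_{\manset,\womanset}$ exactly when either $\man_u$ has interviewed enough of his tie to place $\woman_u$ strictly above $\woman_v$, or $\woman_v$ has interviewed enough of her tie to place $\man_v$ strictly above $\man_u$; and, since breaking a tie of size $\ell$ costs $\ell$ interviews while a lone interview produces no refinement at all (and two agents' ties may be broken by a single shared interview), the cheapest way to do this for all edges simultaneously is to select a set $C\subseteq V$, ``activate'' the relevant tie once inside each gadget of $C$, and thereby cover every edge incident to $C$. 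Engineering the gadget --- the auxiliary agents and the budget $K'$ --- so that (i) activating any vertex gadget has the same constant cost, (ii) activating the gadgets of $C$ makes $\matching$ super-stable if and only if $C$ is a vertex cover of $G$, and (iii) every interview-compatible refinement admitting a super-stable matching must in fact have $\matching$ as that matching and cannot beat the ``activate a vertex cover'' strategy, is the main obstacle, because all three points are sensitive to the fact that a lone interview refines nothing and that one interview may contribute to two different agents' tie-breakings.

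Once the gadget is pinned down, with $K' = c_0 + c\cdot K$ for the appropriate base cost $c_0$ and per-vertex cost $c$, the two directions are routine: a vertex cover of size at most $K$ yields an interview-compatible refinement of cost at most $K'$ under which $\matching$ is super-stable, and conversely any such refinement of cost at most $K'$ yields, via the correspondence above, a vertex cover of size at most $K$. Since the construction is clearly polynomial and every indifference class has size at most $3$, this shows \icr\ is NP-hard, and hence (with Lemma~\ref{ICR-in-np}) NP-complete, even for \smti\ instances in which each indifference class has size at most $3$.
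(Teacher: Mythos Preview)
Your outline follows the paper's approach almost exactly: reduce from \vct, orient the edges via Lemma~\ref{lem:digraph}, introduce one man $\man_v$ and one woman $\woman_v$ per vertex with a single tie each, make $\man_u$ and $\woman_v$ mutually acceptable for each oriented edge $u\to v$, and let $\pref_{\manset,\womanset}$ put each agent's ``own'' partner first so that $\matching=\{(\man_v,\woman_v)\}$ is the unique stable matching.  No auxiliary padding agents are needed.

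The gap is in the part you yourself flag as ``the main obstacle'': the cost accounting.  You describe the argument as choosing a set $C\subseteq V$ of gadgets to ``activate'' at some per-vertex cost $c$, and leave $c_0$ and $c$ to be engineered.  In fact the structure is simpler and sharper than that, and seeing it is what makes the reduction go through cleanly.  The key observation (which is the heart of the paper's ``if'' direction) is that \emph{every} different-index acceptable pair $(\man_u,\woman_v)$ must interview, regardless of which vertices you intend to cover: if they do not interview, then $\man_u$ still cannot compare $\woman_v$ with $\woman_u$ and $\woman_v$ still cannot compare $\man_u$ with $\man_v$, so $(\man_u,\woman_v)$ remains a very weak blocking pair.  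Hence exactly $|E|$ interviews are forced no matter what, and these are all the different-index interviews there are.  The only remaining degree of freedom is which \emph{same-index} pairs $(\man_v,\woman_v)$ interview; calling that set $C$, one then argues directly that $C$ must be a vertex cover (if neither endpoint of an edge is in $C$, the corresponding blocking pair survives because neither side can rank its partner against the other).  This gives $K'=K+|E|$, i.e.\ $c_0=|E|$ and $c=1$, with no gadget engineering required.

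So your framing of ``activate a vertex at cost $c$'' obscures the point: activation is not a monolithic operation with a tunable price, but rather the forced $|E|$ cross-interviews plus a single optional same-index interview per vertex.  Once you state and prove that all cross-interviews are mandatory, the two directions really are routine, and the talk of auxiliary agents and unspecified constants can be dropped.
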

\begin{proof}
By Lemma \ref{ICR-in-np}, \icr\ belongs to NP.  To show NP-hardness, we reduce from \vct. Let the undirected simple graph $G=(V,E)$ be given such that $deg(v)\leq 3$, $\forall v\in V$. Let $V=\{v_1,\ldots,v_n\}$. Let $G'=(V,E')$ be a digraph where (i) $\forall (v_i,v_j) \in E'$, $(v_i,v_j) \in E$, (ii) $\forall (v_i,v_j) \in E$, either $(v_i,v_j) \in E'$ or  $(v_j,v_i) \in E'$ (but not both), and (iii) $\forall v \in V$, $deg^+(v)\leq 2$ and $deg^-(v)\leq 2$. Note that by Lemma~\ref{lem:digraph} such a graph $G'$ exists. 
We create an instance $\inputsetting=(\manset,\womanset,\partialorder_{\manset,\womanset})$ of \smti\ as follows:
\begin{itemize}
	\item For each vertex $v_i \in V$ there is a man $\man_i \in \manset$ and a woman $\woman_i \in \womanset$. That is $\manset=\{\man_i | \forall v_i\in V\}$ and $\womanset=\{\woman_i | \forall v_i\in V\}$.
	\item Each man $\man_i$ finds acceptable $\woman_i$ and all women $\woman_j$ such that  $(v_i,v_j)\in E'$. Moreover, each man $\man_i$ is indifferent between all his acceptable women. 
	\item Each woman $\woman_i$ finds acceptable $\man_i$ and all men $\man_j$ such that  $(v_j,v_i)\in E'$. Moreover, each woman $\woman_i$ is indifferent between all her acceptable men.
\end{itemize}

Note that the total length of the men's preference lists is equal to $|E|+|V|$. Also note that as $deg^+(v)\leq 2$ and $deg^-(v)\leq 2$ for all $v \in V$, hence the indifference classes are of size at most $3$.
%
Let $\pref_{\manset,\womanset}$ be a strict preference ordering under which $\man_i$ and $\woman_i$ rank each other at the top of their preference lists. We prove that $G$ has a vertex cover of size at most $K$ if and only if there exists a refinement $\inputsetting'$ of $\inputsetting$, of cost at most $K' = K+|E|$, such that $\pref_{\manset,\womanset}$ refines $\inputsetting'$ and $\inputsetting'$ admits a super-stable matching. Notice that $\pref_{\manset,\womanset}$ admits only one stable matching, that being $\matching=\{(\man_i,\woman_i)|\forall i\}$. This implies that $\inputsetting'$ must admit exactly one super-stable matching, that being $\matching$. 


\emph{Proof of the only-if direction: Assume that $G$ has a vertex cover $C$ of size $k \leq K$. We show that there is a refinement $\inputsetting'$ of cost $k'=k+|E| \leq K'$ such that $\pref_{\manset,\womanset}$ refines $\inputsetting'$ and $\matching$ is super-stable in $\inputsetting'$.} We create $\inputsetting'$ as follows. For each vertex $v_i\in C$ refine $\inputsetting$ such that both $\man_i$ and $\woman_i$ now have strict preferences as in $\pref_{\manset,\womanset}$
This refinement can of course be achieved by having both $\man_i$ and $\woman_i$ interview all candidates in their lists; this includes an interview between $\man_i$ and $\woman_i$. 
Notice that since interviews are informative to both parties involved, partial refinements in the preference orderings of those persons whose corresponding vertices are not in $C$ must have taken place as well. For example, consider a case in which $v_i,v_j \in C$, $v_k \notin C$, and $(v_i,v_k), (v_j,v_k) \in E'$. Then both $\man_i$ and $\man_j$ interview $\woman_k$ and hence $\woman_k$ must now rank $\man_i$ and $\man_j$ in strict order of preferences as in $\pref_{\manset,\womanset}$. 
An interview is either between same indexed agents, e.g., between $\man_i$ and $\woman_i$, or between agents of different indices, e.g., between $\man_i$ and $\woman_j$ where $i\neq j$. We refer to an interview of the former type as a \emph{same-index} interview and an interview of the latter type as a \emph{different-index} interview. The total number of interviews performed by all agents is going to be $k$ same-index interview plus a number of different-index interviews. The number of different-index interviews under any refinement can be at most $|E|$, and under our proposed refinement is exactly $|E|$ (since $C$ is a vertex cover). Therefore the total number of interviews is exactly $k+|E|$.
It remains to show that $\matching$ is a super-stable matching in $\inputsetting'$. We call a $(man,woman)$ pair a \emph{fixed pair} if they are matched in every stable matching of every strict order refinement of $\inputsetting'$.  We show that $(\man_i,\woman_i)$ is a fixed pair for all $i\leq n$, hence proving that $\matching$ is the only stable matching in every strict order refinement of $\inputsetting'$ and therefore definitely a super-stable matching in $\inputsetting'$. Take any pair $(\man_i,\woman_i)$ such that $v_i \in C$. By our construction, $\man_i$ and $\woman_i$ rank each other at top, so clearly $(\man_i,\woman_i)$ is a fixed pair.
Now take any pair $(\man_j,\woman_j)$ such that $v_j \notin C$. Since $v_j$ is not in the vertex cover, therefore it must be the case that all neighbors of $v_j$ are in $C$. Thus, for any $v_k$ that is a neighbor of $v_j$, it has been already established that $(\man_k,\woman_k)$ is a fixed pair. Therefore $(\man_j,\woman_j)$ is also a fixed pair.  Moreover, neither $\man_j$ nor $\woman_j$ can form a very weak blocking pair with a person they are not matched to. 

\emph{Proof of the if direction: Assume that $\inputsetting$ has a refinement $\inputsetting'$ of cost $k' \leq K'$ such that $\pref_{\manset,\womanset}$ refines $\inputsetting'$ and $\matching$ is super-stable in $\inputsetting'$. We show that $G$ admits a vertex cover of size at most $k'-|E| \leq K$.} We first show that in order to arrive at $\inputsetting'$, every agent should have interviewed every candidate s/he finds acceptable and to whom s/he is not matched. Assume for a contradiction that this is not the case. Take a pair $(\man_i,\woman_j)$, acceptable to each other, who have not interviewed. Therefore, under $\inputsetting'$, $\man_i$ is indifferent between $\woman_j$ and $\woman_i$ (to whom he is matched in $\matching$), and $\woman_j$ is indifferent between $\man_i$ and $\man_j$ (to whom she is matched in $\matching$). Hence $(\man_i,\woman_j)$ constitutes a very weak blocking pair in $\matching$ under $\inputsetting'$, a contradiction. We have established so far that every agent must have interviewed acceptable candidates to whom s/he is not matched, which means that each agent has interviewed all candidates in his/her list who have a different index from him/her. This amounts to the total of $|E|$ interviews. The only remaining interviews for which we have not yet accounted are those corresponding to matched pairs. Let $C$ be a set of vertices such that vertex $v_i$ is in $C$ if and only if $\man_i$ and $\woman_i$ have interviewed under $\inputsetting'$. Notice that $|C| = k' - |E|$. Take any $v_j \notin C$. We show that all neighbors of $v_j$ are in $C$, establishing that $C$ is a vertex cover. Since $v_j \not\in C$, it follows from the construction of $C$ that $\man_j$ and $\woman_j$ have not interviewed under $\inputsetting'$. Assume for a contradiction that $v_j$ has a neighbor, say $v_k$, who too is not in $C$. Therefore $\man_k$ and $\woman_k$ have not interviewed under $\inputsetting'$ either. W.l.o.g. assume that $(v_j,v_k)\ \in E'$. (A similar argument applies if $(v_k,v_j) \in E'$.) Therefore $\man_j$ and $\woman_k$ are acceptable to each other. Furthermore, since neither $\man_j$ nor $\woman_k$ have interviewed their partners in $\matching$, it is the case that $\man_j$ is indifferent between $\woman_j$ (his partner in $\matching$) and $\woman_k$, and $\woman_k$ is indifferent between $\man_k$ (her partner in $\matching$) and $\man_j$. Therefore $(\man_j,\woman_k)$ constitutes a very weak blocking pair in $\matching$ under $\inputsetting'$, a contradiction. 
\end{proof}
\hide{
The next result then follows directly from Theorem~\ref{ICR-in-np} and Theorem~\ref{ICR-np-hard}.
\begin{corollary}
\icr\ is NP-complete even for \smti\ instances and even when each indifference class is of size at most $3$.
\end{corollary}
}

We next show that \icr\ is also NP-complete under a different restricted setting by making small alterations to the proof of Theorem \ref{ICR-np-hard}.

\begin{corollary}\label{cor:np-complete-smt}
\icr\ is NP-complete even for \smt\ instances and even if agents on one side of the market are indifferent between all the candidates.
\end{corollary}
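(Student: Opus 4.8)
The plan is to re-run the reduction behind Theorem~\ref{ICR-np-hard} with the preference lists padded so that they become complete and so that the men's side starts out with no information whatsoever; membership of \icr\ in NP is already provided by Lemma~\ref{ICR-in-np}. Starting from a \vc\ instance $(G,K)$ with $G=(V,E)$ and $V=\{v_1,\dots,v_n\}$, I would orient each edge of $G$ arbitrarily to obtain a digraph $G'=(V,E')$ and construct an \smt\ instance $\hat{\inputsetting}=(\manset,\womanset,\partialorder_{\manset,\womanset})$ on $\manset=\{\man_1,\dots,\man_n\}$ and $\womanset=\{\woman_1,\dots,\woman_n\}$ in which (a) each man $\man_i$ has all of $\womanset$ in a single indifference class (so every man is indifferent between all women), and (b) each woman $\woman_i$ has exactly two indifference classes, the first being $\{\man_i\}\cup\{\man_j : (v_j,v_i)\in E'\}$ and the second being the remaining men. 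Let $\pref_{\manset,\womanset}$ be any strict profile in which each $\man_i$ ranks $\woman_i$ first and each $\woman_i$ ranks $\man_i$ first (and ranks her first class above her second); then $\pref_{\manset,\womanset}$ refines $\hat{\inputsetting}$, $\matching=\{(\man_i,\woman_i):i\in[n]\}$ is the unique matching that is stable with respect to $\pref_{\manset,\womanset}$, and hence any super-stable matching in a refinement of $\hat{\inputsetting}$ that is itself refined by $\pref_{\manset,\womanset}$ must equal $\matching$. I would then prove that $G$ has a vertex cover of size at most $K$ if and only if $\hat{\inputsetting}$ admits an interview-compatible refinement of cost at most $K'=K+|E|$ that is refined by $\pref_{\manset,\womanset}$ and admits a super-stable matching.

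For the ``only if'' direction, given a vertex cover $C$ of $G$ I would perform the $|E|$ ``cross'' interviews $\{\man_i,\woman_j\}$, one for each $(v_i,v_j)\in E'$, together with the $|C|$ ``same-index'' interviews $\{\man_i,\woman_i\}$ for $v_i\in C$; this has cost $|E|+|C|\le K'$. To check that $\matching$ is super-stable in the resulting $\inputsetting'$, observe that the only potential blocking pairs are $(\man_i,\woman_j)$ with $i\ne j$: if $(v_i,v_j)\notin E'$ then $\man_i$ lies in $\woman_j$'s second class whereas $\matching(\woman_j)=\man_j$ lies in her first class, so $\woman_j$ strictly prefers her partner and the pair cannot block; and if $(v_i,v_j)\in E'$ then, since $C$ is a vertex cover, either $v_i\in C$, so that $\man_i$ has interviewed both $\woman_i$ and $\woman_j$ and therefore (by the choice of $\pref_{\manset,\womanset}$) strictly prefers $\woman_i$, or $v_j\in C$, so that $\woman_j$ has interviewed both $\man_j$ and $\man_i$ and strictly prefers $\man_j$. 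Checking that $\pref_{\manset,\womanset}$ refines $\inputsetting'$ is routine, since interviewing refines an agent's order only within the set of candidates actually interviewed, and those refinements are taken consistently with $\pref_{\manset,\womanset}$.

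For the ``if'' direction, suppose $\inputsetting'$ is an interview-compatible refinement of $\hat{\inputsetting}$ of cost at most $K'$ that is refined by $\pref_{\manset,\womanset}$ and admits a super-stable matching, which as noted above must be $\matching$. For each $(v_i,v_j)\in E'$ the pair $(\man_i,\woman_j)$ is not a very weak blocking pair, so in $\inputsetting'$ either $\man_i$ strictly prefers $\woman_i$ to $\woman_j$ or $\woman_j$ strictly prefers $\man_j$ to $\man_i$; since in $\hat{\inputsetting}$ the men are totally indifferent and $\man_i,\man_j$ lie in a common class of $\woman_j$, such a strict comparison can only have arisen because the agent in question interviewed both candidates concerned. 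Thus in the first case $\man_i$ interviewed both $\woman_i$ and $\woman_j$, and in the second $\woman_j$ interviewed both $\man_j$ and $\man_i$; either way the cross interview $\{\man_i,\woman_j\}$ took place, and in addition at least one of $\{\man_i,\woman_i\}$ and $\{\man_j,\woman_j\}$ took place. Hence all $|E|$ cross interviews (which are pairwise distinct and of different index, so disjoint from the same-index interviews) took place, the set $C=\{v_\ell : \{\man_\ell,\woman_\ell\}\text{ took place}\}$ is a vertex cover of $G$, and the cost is at least $|E|+|C|$, forcing $|C|\le K$. The step I expect to be the main obstacle is calibrating the women's two-class structure so that, on the one hand, the ``padding'' pairs created when completing the lists are always strictly dominated and hence never force an interview, while on the other hand the interviews that genuinely are forced total exactly $|E|$ cross interviews plus a vertex cover's worth of same-index interviews; with that calibration in place, everything else closely mirrors the proof of Theorem~\ref{ICR-np-hard}.
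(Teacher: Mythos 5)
Your proposal is correct and follows essentially the same route as the paper: modify the reduction of Theorem~\ref{ICR-np-hard} so that men are completely indifferent, women have two indifference classes (same-index partner plus graph neighbours on top), and argue that each edge forces its cross interview(s) while same-index interviews correspond to a vertex cover. The only difference is bookkeeping: you orient each edge arbitrarily and place only in-neighbours in a woman's first class, so the budget is $K+|E|$, whereas the paper uses undirected adjacency and budget $K+2|E|$; both calibrations work.
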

\begin{proof}
W.l.o.g. assume that all men are indifferent between all women.
Modify the reduction presented in the proof of Theorem~\ref{ICR-np-hard} as follows.
\begin{itemize}
	\item For each vertex $v_i \in V$ there is a man $\man_i$ in $\manset$ and a woman $\woman_i$ in $\womanset$. That is $\manset=\{\man_i | \forall v_i\in V\}$ and $\womanset=\{\woman_i | \forall v_i\in V\}$.
	\item Each man $\man_i$ finds all women acceptable and is indifferent between them.
	\item Each woman $\woman_i$ finds all men acceptable and has two indifference classes. In the top indifference class are $\man_i$ and all men $\man_j$ such that $(v_i,v_j)\in E$. In the second indifference class are all other men.
\end{itemize}

Note that the total length of the women's first indifference classes is equal to $2|E| + |V|$.  
Let $\pref_{\manset,\womanset}$ be a strict preference ordering under which $\man_i$ and $\woman_i$ rank each other at the top of their preference lists. We prove that $G$ has a vertex cover of size at most $K$ if and only if there exists a refinement $\inputsetting'$ of $\inputsetting$, of cost at most $K' = K+2|E|$, such that $\pref_{\manset,\womanset}$ refines $\inputsetting'$ and $\inputsetting'$ admits a super-stable matching. Notice that $\pref_{\manset,\womanset}$ admits only one stable matching, that being $\matching=\{(\man_i,\woman_i)|\forall i\}$. This implies that $\inputsetting'$ must admit exactly one super-stable matching, that being $\matching$. Modify the proof of Theorem~\ref{ICR-np-hard} as follows.
 
\emph{In the only-if direction}: For each vertex $v_i \in C$ refine $\inputsetting$ such that $\man_i$ has a strict preference ordering, as in $\pref_{\man_i}$, over women in $\{\woman_i\}\cup\{\woman_j| (v_i,v_j)\in E \}$ and $\woman_i$ has a strict preference ordering, as in $\pref_{\woman_i}$, over men in $\{\man_i\}\cup\{\man_j| (v_i,v_j)\in E \}$. Consequently, for all $v_j$ adjacent to $v_i$, $\man_i$ prefers $\woman_i$ to $\woman_j$ and likewise $\woman_i$ prefers $\man_i$ to $\man_j$. This refinement can be achieved by having $\man_i$ interview $\woman_i$ and all $\woman_j$ such that $(v_i,v_j)\in E$, and additionally having $\woman_i$ interview all $\man_j$ such that $(v_i,v_j) \in E$. The number of different-index interviews under any refinement can be at most $2|E|$, and under our proposed refinement is exactly $2|E|$ (since $C$ is a vertex cover). So the total number of interviews is exactly $k+2|E|$. It remains to show that $\matching$ is a super-stable matching in $\inputsetting'$. Assume for a contradiction that there exists a very weak blocking pair $(\man_i,\woman_j)$. 
	\begin{itemize} 
		\item If $(v_i,v_j) \in E$, then $v_i$ or $v_j$ is in $C$. If $v_i \in C$ then $\man_i$ and $\woman_i$ have interviewed and therefore $\woman_i \pref_{\man_i} \woman_j$. If $v_j \in C$ then $\man_j$ and $\woman_j$ have interviewed and therefore $\man_j \pref_{\woman_j} \man_i$. Both cases imply that $(\man_i,\woman_j)$ is not a very weak blocking pair, a contradiction.
		\item If $(v_i,v_j) \notin E$ then $\man_j \pref_{\woman_j} \man_i$, therefore $(\man_i,\woman_j)$ is not a very weak blocking pair, a contradiction.
	\end{itemize}
	
	\emph{In the if direction}: We show that in order to arrive at $\inputsetting'$, every man $\man_i$ should have interviewed all women $\woman_j$ such that $(v_i,v_j) \in E$, and likewise every woman $\woman_i$ should have interviewed all men $\man_j$ such that $(v_i,v_j) \in E$. The proof is similar to that presented in the proof of Theorem \ref{ICR-np-hard}.
	Hence we can conclude that at least $2|E|$ different-index interviews must have taken place in the refinement. The rest of the proof is similar to that presented in the proof of Theorem \ref{ICR-np-hard}, with the difference that $|C| \leq k' - 2|E|$.
\end{proof}

In the proof of Theorem~\ref{ICR-np-hard}, $\matching$ is the unique stable matching under $\pref_{\manset,\womanset}$. Therefore, 
it follows from the proofs of Theorem~\ref{ICR-np-hard} and Corollary~\ref{cor:np-complete-smt} that \icrx\ is also NP-complete for the restrictions stated in those results. 
\begin{corollary}\label{cor:icrx}
\icrx\ is NP-complete even for \smti\ instances, and even when each indifference class is of size at most $3$. \icrx\ is also NP-complete even for \smti\ instances and even if agents on one side of the market are indifferent between all the candidates.
\end{corollary}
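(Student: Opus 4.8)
The plan is to reuse, essentially verbatim, the two reductions from \vct\ already constructed in the proofs of Theorem~\ref{ICR-np-hard} and Corollary~\ref{cor:np-complete-smt}, simply attaching to each produced instance the matching $\matching=\{(\man_i,\woman_i)\mid \forall i\}$. Before doing so, however, we must verify that \icrx\ lies in NP. This goes exactly as in Lemma~\ref{ICR-in-np}: given \smpi\ instances $\inputsetting$ and $\inputsetting'$, a strict profile $\pref_{\manset,\womanset}$, a matching $\matching$ and a bound $K$, Propositions~\ref{prop1} and~\ref{prop2} let us check in polynomial time that $\inputsetting'$ is an interview-compatible refinement of $\inputsetting$ of cost at most $K$; it is routine to check that $\pref_{\manset,\womanset}$ refines $\inputsetting'$; and — the only difference from Lemma~\ref{ICR-in-np} — instead of running the super-stable matching algorithm of \cite{RCIIL-ec14} to test whether \emph{some} super-stable matching exists, we directly test whether the \emph{given} $\matching$ has a very weak blocking pair under $\inputsetting'$, which is immediate from the definition. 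Hence \icrx\ is in NP.

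For NP-hardness under the first restriction, take a \vct\ instance $(G,K)$, apply the reduction of Theorem~\ref{ICR-np-hard} to obtain the \smti\ instance $\inputsetting$ (with indifference classes of size at most $3$), the profile $\pref_{\manset,\womanset}$ and the bound $K'=K+|E|$, and output the \icrx\ instance $(\inputsetting,\pref_{\manset,\womanset},\matching,K')$ with $\matching=\{(\man_i,\woman_i)\}$. As noted in that proof, $\matching$ is the \emph{unique} stable matching w.r.t.\ $\pref_{\manset,\womanset}$, so in particular it is weakly stable w.r.t.\ $\pref_{\manset,\womanset}$ and the instance is well formed. The key point is that for \emph{any} interview-compatible refinement $\inputsetting'$ of $\inputsetting$ that is refined by $\pref_{\manset,\womanset}$, the statement ``$\inputsetting'$ admits a super-stable matching'' is equivalent to ``$\matching$ is super-stable in $\inputsetting'$'': by the characterisation recorded in Section~\ref{sec:defs} (from \cite[Lemma~3.2.4]{manlove-book}), a super-stable matching of $\inputsetting'$ is weakly stable w.r.t.\ every linear extension of $\partialorder'_{\manset,\womanset}$, in particular w.r.t.\ $\pref_{\manset,\womanset}$, and since $\matching$ is the unique stable matching under $\pref_{\manset,\womanset}$, any such matching must equal $\matching$. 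Consequently the yes-instances of the \icr\ instance built in Theorem~\ref{ICR-np-hard} and of the \icrx\ instance above coincide, so the forward and backward correspondences with vertex covers of $G$ transfer verbatim. The second restriction is handled identically, now starting from the reduction of Corollary~\ref{cor:np-complete-smt} (bound $K'=K+2|E|$), where again $\matching=\{(\man_i,\woman_i)\}$ is the unique stable matching w.r.t.\ $\pref_{\manset,\womanset}$.

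The only content beyond bookkeeping — and the one step that needs care — is this equivalence between ``admits a super-stable matching'' and ``$\matching$ is super-stable''. It rests on two facts, both already available: that super-stability of a matching coincides with weak stability under all linear extensions of the partial orders, and that in each of the two reductions $\pref_{\manset,\womanset}$ was engineered so that $\man_i$ and $\woman_i$ top each other's lists, which pins $\matching$ down as the unique stable matching. Everything else (NP membership, and the vertex-cover correspondence in both directions) is inherited directly from the proofs of Theorem~\ref{ICR-np-hard} and Corollary~\ref{cor:np-complete-smt}, so no new combinatorial argument is required.
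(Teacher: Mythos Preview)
Your proposal is correct and takes essentially the same approach as the paper: the paper's argument is the single observation that in both reductions $\matching=\{(\man_i,\woman_i)\}$ is the unique stable matching under $\pref_{\manset,\womanset}$, so ``admits a super-stable matching'' collapses to ``$\matching$ is super-stable''. You have simply spelled out the NP-membership check and the equivalence more explicitly than the paper does.
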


We remark that Theorem 4.4 of~\cite{RCIL-ec13} implies that \icrx\ is NP-complete and, likewise, Corollary~\ref{cor:icrx} implies Theorem 4.4 of~\cite{RCIL-ec13}. However, Corollary~\ref{cor:icrx} is stronger as it is stated for a more restricted setting. 


\section{Polynomial-time solvable variants}
\label{sec:poly}
\subsection{Preliminaries}
In this section we explore the tractability of \minicrx\ under various restricted settings. Recall that we have reduced from \vc\ to \icr\ and \icrx\ in order to show that these problems are NP-hard. Here we present a reverse reduction, from \icrx\ to \vc, that will come in handy in proving our claims.


Let an instance $(\inputsetting,\pref_{\manset,\womanset},\matching)$ of \icrx\ be given. As $\matching$ is weakly stable w.r.t.\ $\inputsetting$, it admits no strong blocking pair.  If $\matching$ is not super-stable w.r.t.\ $\inputsetting$, then $\matching$ must admit some very weak blocking pairs. We refer to such blocking pairs as \emph{potential blocking pairs}. We distinguish between between potential blocking pairs by the degree of choice one has when attempting to resolve them.

\begin{definition} [Potential Blocking Pair (PBP)]
Given an \icrx\ instance $(\inputsetting,\pref_{\manset,\womanset},\matching)$, a pair $(\man,\woman)$ is a \emph{potential blocking pair (PBP)} if $(\man,\woman)$ is a very weak blocking pair under $\inputsetting$. Each PBP $(\man,\woman)$ belongs to either of the following classes.
\begin{itemize} 
	\item \emph{Potential Blocking Pair of Degree 1 (PBP-D1)} if either $\man$ or $\woman$ strictly prefers the other to his or her current partner under $\pref_{\manset,\womanset}$. 
	\item \emph{Potential Blocking Pair of Degree 2 (PBP-D2)} if both $\man$ and $\woman$ strictly prefer their partners to each other under $\pref_{\manset,\womanset}$. 
\end{itemize}
\end{definition}
Let $\inputsetting'$ be an interview-compatible refinement of $\inputsetting$. We say that a given potential blocking pair of $\inputsetting$, $(\man,\woman)$, is \emph{resolved} under $\inputsetting'$ if $(\matching(\man),\woman)\in \partialorder'_{\man}$ or $(\matching(\woman),\man) \in \partialorder'_{\woman}$.

If $(\man,\woman)$ is a PBP-D2, then it must be that $\man$ and $\woman$ cannot compare each other and their current partners under $\inputsetting$.  Thus in order to resolve $(\man,\woman)$ it is sufficient, and necessary, that $\man$ or $\woman$ learn his/her true preference ordering over his/her partner and the other side. 

Let $(\man,\woman)$ be a PBP-D1 and assume that $\man$ strictly prefers $\woman$ to $\matching(\man)$ (the argument is similar if $\man \pref_{\woman} \matching(\woman)$). Therefore, $\woman$ must find $\man$ and $\matching(\woman)$ incomparable under $\inputsetting$, or $(\man,\woman)$ either blocks $\matching$ or is not a PBP, and $\matching(\woman) \pref_{\woman} \man$, or $(\man,\woman)$  blocks $\matching$. Furthermore, in order to resolve this PBP $\woman$ has to learn that she prefers $\matching(\woman)$ to $\man$.

In what follows we use $PBP$, $PBP_1$, and $PBP_2$ to refer to the set of potential blocking pairs, and those of degree 1 and degree 2 respectively. 


\begin{proposition}\label{prop:resolved}
Let $(\inputsetting,\pref_{\manset,\womanset},\matching)$ be an instance of \icrx\ and $\inputsetting'$ be an interview-compatible refinement of $\inputsetting$. Then $\matching$ is super-stable under $\inputsetting'$ if and only if all PBPs in $\inputsetting$ are resolved under $\inputsetting'$. 
\end{proposition}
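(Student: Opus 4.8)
The plan is to prove both implications by contraposition, leaning on two elementary monotonicity facts about refinements that I would state up front. First, by Observation~\ref{obs:refinement}, if $(\candidate_1,\candidate_2)\in\partialorder_{\agent}$ then $(\candidate_1,\candidate_2)\in\partialorder'_{\agent}$; that is, refining can only convert incomparabilities into strict preferences, never the reverse. Second, refining changes neither the set of candidates each agent finds acceptable nor the matching $\matching$ under consideration, so in particular $\matching(\man)$ remains acceptable to $\man$ (and $\matching(\woman)$ to $\woman$) in $\inputsetting'$. Throughout I would use the characterisation that $(\man,\woman)$ is a very weak blocking pair of $\matching$ in an instance $J$ exactly when $\man$ and $\woman$ are acceptable to each other in $J$ and, on each side, the agent is either unmatched in $\matching$ or does \emph{not} strictly prefer its $\matching$-partner to the other agent in $J$; equivalently, $\man$ is unmatched or $(\matching(\man),\woman)\notin\partialorder^J_{\man}$, and symmetrically for $\woman$.

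For the ``if'' direction, I would assume every PBP of $\inputsetting$ is resolved under $\inputsetting'$ and suppose, for contradiction, that $(\man,\woman)$ is a very weak blocking pair of $\matching$ under $\inputsetting'$. The first step is to observe that $(\man,\woman)$ must already be a PBP of $\inputsetting$: acceptability transfers between $\inputsetting$ and $\inputsetting'$ since the acceptable sets coincide, and if (say) $\man$ strictly preferred $\matching(\man)$ to $\woman$ under $\inputsetting$ then by the monotonicity fact he would still do so under $\inputsetting'$, contradicting that $(\man,\woman)$ blocks under $\inputsetting'$; hence neither side strictly prefers its partner under $\inputsetting$, so $(\man,\woman)$ is a very weak blocking pair of $\matching$ under $\inputsetting$. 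Being a PBP, it is resolved under $\inputsetting'$, so $(\matching(\man),\woman)\in\partialorder'_{\man}$ or $(\matching(\woman),\man)\in\partialorder'_{\woman}$; in either case one of the two agents is matched and strictly prefers its $\matching$-partner to the other agent under $\inputsetting'$, which directly contradicts $(\man,\woman)$ being a very weak blocking pair under $\inputsetting'$.

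For the ``only if'' direction, I would assume $\matching$ is super-stable under $\inputsetting'$ and let $(\man,\woman)$ be an arbitrary PBP of $\inputsetting$, aiming to show it is resolved under $\inputsetting'$. Suppose not, i.e.\ $(\matching(\man),\woman)\notin\partialorder'_{\man}$ and $(\matching(\woman),\man)\notin\partialorder'_{\woman}$. On the man's side: if $\man$ is unmatched the blocking condition for $\man$ holds trivially; otherwise, since both $\woman$ and $\matching(\man)$ are acceptable to $\man$ in $\inputsetting'$ and $(\matching(\man),\woman)\notin\partialorder'_{\man}$, either $\man$ strictly prefers $\woman$ to $\matching(\man)$ or he cannot compare them under $\inputsetting'$ — in all cases $\man$ does not strictly prefer $\matching(\man)$ to $\woman$. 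The symmetric statement holds for $\woman$. Since $(\man,\woman)$ is acceptable in $\inputsetting$ and hence in $\inputsetting'$, it follows that $(\man,\woman)$ is a very weak blocking pair of $\matching$ under $\inputsetting'$, contradicting super-stability; therefore $(\man,\woman)$ is resolved.

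I expect the argument to be essentially bookkeeping rather than to contain a genuine obstacle; the one point requiring care — and which I would isolate as the reusable observation before both directions — is the monotonicity fact that a strict preference present in $\inputsetting$ survives in every refinement, since it is exactly what forces a very weak blocking pair of $\inputsetting'$ to be a PBP of $\inputsetting$ and, dually, prevents an unresolved PBP from vanishing under refinement. I would also remark that neither the weak stability of $\matching$ with respect to $\inputsetting$ nor the hypothesis (present in the definition of \icrx\ but not in this proposition) that $\pref_{\manset,\womanset}$ refines $\inputsetting'$ plays any role in the proof.
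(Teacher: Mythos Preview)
Your proof is correct. The paper in fact states this proposition without proof, treating it as immediate from the definitions; your argument spells out exactly the bookkeeping that makes it so, with the monotonicity of strict preferences under refinement (Observation~\ref{obs:refinement}) doing the work in both directions, and your closing remark that neither the weak stability of $\matching$ nor the hypothesis that $\pref_{\manset,\womanset}$ refines $\inputsetting'$ is needed is accurate.
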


It is easy to see that for a potential blocking pair $(\man,\woman)$ to be resolved, at least one of $\man$ or $\woman$  needs to interview both the other side and his or her current partner and conclude that s/he prefers his or her current partner to the other side. The next proposition then immediately follows.

\begin{proposition}\label{prop:pbp}
Let $(\inputsetting,\pref_{\manset,\womanset},\matching)$ be an instance of \icrx\ and $\inputsetting'$ be an interview-compatible refinement of $\inputsetting$. Then $\matching$ is super-stable under $\inputsetting'$ only if, for all $(\man,\woman) \in PBP$, $\man$ and $\woman$ have interviewed under $\inputsetting'$.
\end{proposition}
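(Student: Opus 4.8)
The plan is to derive the proposition directly from Proposition~\ref{prop:resolved} together with the mechanics of how interviews refine an instance, as spelled out in the proof of Proposition~\ref{prop2}. First I would suppose $\matching$ is super-stable under $\inputsetting'$ and apply Proposition~\ref{prop:resolved} to conclude that every PBP $(\man,\woman)$ of $\inputsetting$ is \emph{resolved} under $\inputsetting'$, i.e.\ $(\matching(\man),\woman)\in\partialorder'_{\man}$ or $(\matching(\woman),\man)\in\partialorder'_{\woman}$. Fix an arbitrary $(\man,\woman)\in PBP$; whichever disjunct holds pins down which of the two agents is matched (so that the corresponding expression $\matching(\man)$ or $\matching(\woman)$ is well defined), and by the symmetry of the construction it suffices to carry out the argument in the case $(\matching(\man),\woman)\in\partialorder'_{\man}$, so that $\man$ is matched and strictly prefers $\matching(\man)$ to $\woman$ under $\inputsetting'$.

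The key step is to show that, under $\inputsetting$ itself, $\man$ was unable to compare $\woman$ and $\matching(\man)$. Since $(\man,\woman)$ is a very weak blocking pair of $\inputsetting$ and $\man$ is matched, the $\man$-side of that condition leaves exactly two options: either $\man$ strictly prefers $\woman$ to $\matching(\man)$ in $\inputsetting$, or $\man$ finds them incomparable in $\inputsetting$. The first option is impossible: by Observation~\ref{obs:refinement} a strict preference $(\woman,\matching(\man))\in\partialorder_{\man}$ would be inherited, giving $(\woman,\matching(\man))\in\partialorder'_{\man}$, which together with $(\matching(\man),\woman)\in\partialorder'_{\man}$ contradicts antisymmetry of the partial order $\partialorder'_{\man}$. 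Hence $\man$ finds $\woman$ and $\matching(\man)$ incomparable in $\inputsetting$ but can compare them (in fact strictly prefers $\matching(\man)$) in $\inputsetting'$.

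Finally I would invoke the observation underlying the proof of Proposition~\ref{prop2}: whenever an agent goes from being unable to compare two candidates under $\inputsetting$ to strictly preferring one to the other under $\inputsetting'$, that agent must have interviewed \emph{both} candidates. Applying this to $\man$ with the pair $\{\woman,\matching(\man)\}$ yields that $\man$ interviewed $\woman$ under $\inputsetting'$; since an interview between $\man$ and $\woman$ is the same event regardless of which side we name first, this is precisely the assertion that $\man$ and $\woman$ have interviewed. The symmetric disjunct $(\matching(\woman),\man)\in\partialorder'_{\woman}$ gives the identical conclusion with the roles of $\man$ and $\woman$ interchanged, completing the argument. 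I do not expect a genuine obstacle here; the only points requiring care are confirming that the relevant partner ($\matching(\man)$ or $\matching(\woman)$) is actually defined so the case analysis on the very weak blocking pair is exhaustive, and that the contradiction in the key step uses only antisymmetry/irreflexivity of $\partialorder'_{\man}$, which holds since it is a partial order.
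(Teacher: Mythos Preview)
Your proposal is correct and follows essentially the same approach as the paper. The paper does not give a formal proof of this proposition; it simply observes, in the sentence preceding the statement, that resolving a PBP $(\man,\woman)$ forces at least one of $\man,\woman$ to interview both the other and his/her partner, and declares the proposition immediate from this together with Proposition~\ref{prop:resolved}. Your write-up is a careful unpacking of exactly that observation.
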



For each agent $\agent \in \manset\cup \womanset$ let $PBP_1(\agent)$ denote the set of candidates $\candidate$ such that either $(\agent,\candidate)$ or $(\candidate, \agent)$ is in $PBP_1$ and $\agent \pref_{\candidate} \matching(\candidate)$. 

\begin{lemma}\label{lem:type2}
Let $(\inputsetting,\pref_{\manset,\womanset},\matching)$ be an instance of \icrx\ and $\inputsetting'$ be an interview-compatible refinement of $\inputsetting$. Then $\matching$ is super-stable under $\inputsetting'$ only if $\agent$ has interviewed $\matching(\agent)$under $\inputsetting'$ for all agents $\agent$ where $PBP_1(\agent)\neq \emptyset$.
\end{lemma}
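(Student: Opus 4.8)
The plan is to exploit Proposition~\ref{prop:resolved}: if $\matching$ is super-stable under $\inputsetting'$ then every PBP of $\inputsetting$ is resolved under $\inputsetting'$, and I will show that resolving a PBP of the relevant kind necessarily entails the interview in question. By the symmetry between the two sides it suffices to treat $\agent=\man$ for a man $\man$. Fix $\woman\in PBP_1(\man)$; by definition the unordered pair $\{\man,\woman\}$ is a PBP of degree $1$ and $\man\pref_\woman\matching(\woman)$, i.e.\ $\woman$ strictly prefers $\man$ to her $\matching$-partner under the true preferences $\pref_{\manset,\womanset}$. Being a PBP, $\{\man,\woman\}$ is a very weak blocking pair under $\inputsetting$, so $\man$ and $\woman$ find each other acceptable, and this acceptability persists in $\inputsetting'$.

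First I would determine how $\man$ ranks $\matching(\man)$ and $\woman$. Since $\matching$ is weakly stable with respect to $\pref_{\manset,\womanset}$ (part of the \icrx\ instance) and $\woman$ strictly prefers $\man$ to $\matching(\woman)$ under $\pref_{\manset,\womanset}$, the pair $(\man,\woman)$ cannot be a strong blocking pair under $\pref_{\manset,\womanset}$; hence $\man$ is matched and $\matching(\man)\pref_\man\woman$ under $\pref_{\manset,\womanset}$. Because $\pref_{\manset,\womanset}$ refines $\inputsetting$, it follows that $\man$ does not strictly prefer $\woman$ to $\matching(\man)$ under $\inputsetting$. On the other hand, $\{\man,\woman\}$ being a very weak blocking pair under $\inputsetting$ forces $\man$ to be unmatched, or to strictly prefer $\woman$ to $\matching(\man)$, or to be unable to compare them, under $\inputsetting$; the first two options are now excluded, so $\man$ cannot compare $\matching(\man)$ and $\woman$ under $\inputsetting$.

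Then I would apply Proposition~\ref{prop:resolved}: the PBP $\{\man,\woman\}$ is resolved under $\inputsetting'$, i.e.\ $(\matching(\man),\woman)\in\partialorder'_{\man}$ or $(\matching(\woman),\man)\in\partialorder'_{\woman}$. Recalling that $\pref_{\manset,\womanset}$ refines $\inputsetting'$ (the refinements relevant to \icrx\ are consistent with the true preferences) and that $\man\pref_\woman\matching(\woman)$, the second disjunct is impossible; hence $(\matching(\man),\woman)\in\partialorder'_{\man}$, so $\man$ strictly prefers $\matching(\man)$ to $\woman$ under $\inputsetting'$ although he could not compare them under $\inputsetting$. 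By the interview model (cf.\ the argument in the proof of Proposition~\ref{prop2}), the only way $\man$ can acquire a strict comparison between two previously incomparable candidates is to have interviewed both of them; in particular $\man$ has interviewed $\matching(\man)$ under $\inputsetting'$, as required. The main thing to be careful about is the asymmetry inherent in a degree-$1$ PBP: one must check that it is on $\man$'s side, not $\woman$'s, that the original incomparability sits, which is precisely what the ranking facts $\matching(\man)\pref_\man\woman$ and $\man\pref_\woman\matching(\woman)$ deliver; the remaining steps are routine manipulations of the stability definitions.
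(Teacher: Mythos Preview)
Your proof is correct and follows essentially the same approach as the paper's, just argued in the direct direction rather than by contrapositive, and with the intermediate steps (why $\man$ cannot compare $\matching(\man)$ and $\woman$ under $\inputsetting$, and why the resolution cannot occur on $\woman$'s side) spelled out explicitly where the paper relies on its preceding discussion of PBP-D1's. Your explicit remark that $\pref_{\manset,\womanset}$ refines $\inputsetting'$ is a point the paper also uses tacitly when it concludes the PBP remains unresolved.
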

\begin{proof}
Assume for a contradiction that there exists an agent $\agent$ where $PBP_1(\agent)\neq \emptyset$ and $\agent$ has not interviewed $\matching(\agent)$. Therefore for every $\candidate \in PBP_1(\agent)$ it is still the case that $\agent$ cannot compare $\candidate$ and $\matching(\agent)$, and $\candidate$ prefers $\agent$ to $\matching(\candidate)$. Hence there exists at least one unresolved PBP under $\inputsetting'$. 
\end{proof}

\subsection{Reduction from \icrx\ to \vc}


Let $(\inputsetting,\pref_{\manset,\womanset},\matching)$ be an instance of \icrx. Let $\manset'=\{\man|PBP_1(\man)\neq \emptyset \vee PBP_1(\matching(\man))\neq \emptyset\}$.
Let $G(\inputsetting,\matching)=(V,E)$ be an undirected graph whose vertices $V$ correspond to matched pairs $(\man,\matching(\man))$. 
Let $PBP'_2 = \{(\man,\woman)| (\man,\woman) \in PBP_2, \man \notin \manset', \matching(\woman)\notin\manset'\}$. 
Let there be an edge between any two vertices $(\man,\matching(\man))$ and $(\man',\matching(\man'))$ if $(\man,\matching(\man')) \in PBP'_2$ or $(\man',\matching(\man)) \in PBP'_2$. Remove any vertex with degree zero. Note that for any remaining vertex $(\man,\matching(\man))$ it is the case that $\man \notin \manset'$.

\begin{theorem}\label{thm:reduction-to-VC}
$G(\inputsetting,\matching)$ has a vertex cover of size at most $K$ if and only if there exists a refinement $\inputsetting'$ of $\inputsetting$, of cost at most $K'=|PBP|+|\manset'|+K$, 
such that $\pref_{\man,\woman}$ refines $\inputsetting'$ and $\matching$ is super-stable in $\inputsetting'$. 
\end{theorem}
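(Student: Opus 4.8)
The plan is to prove the equivalence by decomposing, for any interview-compatible refinement $\inputsetting'$ of $\inputsetting$ under which $\matching$ is super-stable, the set of interviews that realise $\inputsetting'$ into three disjoint groups, two of which are forced and of fixed size and the third of which corresponds exactly to a vertex cover of $G(\inputsetting,\matching)$. First I would record the forced interviews. By Proposition~\ref{prop:pbp}, if $\matching$ is super-stable in $\inputsetting'$ then for every $(\man,\woman)\in PBP$ the interview between $\man$ and $\woman$ has been performed; since a matched pair is never a blocking pair we have $\woman\ne\matching(\man)$, so these are $|PBP|$ pairwise-distinct ``different-index'' interviews. By Lemma~\ref{lem:type2}, for every agent $\agent$ with $PBP_1(\agent)\ne\emptyset$ the interview between $\agent$ and $\matching(\agent)$ has been performed; by the definition of $\manset'$ these are exactly the ``same-index'' interviews on the matched pairs $(\man,\matching(\man))$ with $\man\in\manset'$, a further $|\manset'|$ distinct interviews disjoint from the first group. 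Hence every valid $\inputsetting'$ has cost at least $|PBP|+|\manset'|$.

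The next step is to show that, once these forced interviews are present, the only unresolved potential blocking pairs are those in $PBP'_2$, and that resolving such a pair amounts to selecting one endpoint of its associated edge of $G(\inputsetting,\matching)$. For a PBP-D1 pair $(\man,\woman)$, say with $\man$ strictly preferring $\woman$ to $\matching(\man)$ in $\pref_{\manset,\womanset}$, we have $\man\in PBP_1(\woman)$ and therefore $\matching(\woman)\in\manset'$; the forced interviews then let $\woman$ compare $\man$ with $\matching(\woman)$, and as $\matching(\woman)\pref_\woman\man$ holds in the true profile the pair is resolved (Proposition~\ref{prop:resolved}). For a PBP-D2 pair $(\man,\woman)\notin PBP'_2$ we have $\man\in\manset'$ or $\matching(\woman)\in\manset'$, and the associated forced same-index interview resolves it identically. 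For $(\man,\woman)\in PBP'_2$, the discussion of potential blocking pairs preceding Proposition~\ref{prop:resolved} shows that resolution requires, and is achieved by, performing either the same-index interview on the matched pair $(\man,\matching(\man))$ or the one on $(\matching(\woman),\woman)$ -- that is, selecting one endpoint of the edge of $G(\inputsetting,\matching)$ corresponding to $(\man,\woman)$; a matched pair involved in no such edge imposes no constraint, which is why degree-zero vertices are removed. Thus ``resolve all of $PBP'_2$ using the fewest additional same-index interviews'' is literally the minimum vertex cover problem on $G(\inputsetting,\matching)$.

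With this correspondence the two implications are immediate. Given a vertex cover $D$ of $G(\inputsetting,\matching)$ with $|D|\le K$, I would let $\inputsetting'$ be the refinement produced by the interview set comprising the $|PBP|$ PBP-interviews, the $|\manset'|$ forced same-index interviews, and the $|D|$ same-index interviews indexed by $D$; these three batches are pairwise disjoint (the vertices of $G(\inputsetting,\matching)$ are matched pairs $(\man,\matching(\man))$ with $\man\notin\manset'$), so the cost of $\inputsetting'$ is at most $|PBP|+|\manset'|+K=K'$, the profile $\pref_{\manset,\womanset}$ refines $\inputsetting'$ by construction of the interview model, and every PBP is resolved by the case analysis above, so $\matching$ is super-stable in $\inputsetting'$. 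Conversely, given $\inputsetting'$ of cost at most $K'$ with $\matching$ super-stable, I would set $S$ to be the collection of matched pairs $(\man,\matching(\man))$ with $\man\notin\manset'$ whose two members have interviewed in $\inputsetting'$; since the $|PBP|+|\manset'|$ forced interviews are all present and disjoint from $S$, we get $|S|\le K$, and $S$ is a vertex cover because for the edge arising from any $(\man,\woman)\in PBP'_2$, super-stability forces $(\man,\woman)$ to be resolved, which for a PBP-D2 pair can only happen if $\man$ interviewed $\matching(\man)$ or $\woman$ interviewed $\matching(\woman)$, placing one endpoint of that edge into $S$.

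I expect the main obstacle to be the case analysis establishing that precisely the potential blocking pairs outside $PBP'_2$ are automatically discharged by the two batches of forced interviews -- in particular, checking for PBP-D1 pairs that membership in $\manset'$ encodes exactly which of the two agents' same-index interviews is compelled -- together with the disjointness bookkeeping that makes the three cost contributions add up cleanly to $K'=|PBP|+|\manset'|+K$.
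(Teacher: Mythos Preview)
Your proposal is correct and follows essentially the same approach as the paper's own proof: both construct $\inputsetting'$ from a vertex cover via the three batches of interviews (PBP interviews, same-index interviews on $\manset'$, same-index interviews on the cover), and both extract a cover from $\inputsetting'$ by taking the matched pairs outside $\manset'$ that have interviewed, bounding its size via disjointness from the $|PBP|+|\manset'|$ forced interviews. Your write-up is, if anything, slightly more explicit than the paper's about the disjointness bookkeeping and about why the PBP-D1 and the PBP-D2$\setminus PBP'_2$ cases are automatically discharged; the only cosmetic point is that your set $S$ is not restricted to $V$, but intersecting with $V$ immediately gives the desired cover of size at most $K$.
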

\begin{proof}
Assume that $G(\inputsetting,\matching)$ has a vertex cover $C$ of size at most $K$. Let $\inputsetting'$ be a refinement of $\inputsetting$ under which the following interviews have taken place. 
\begin{enumerate}
	\item Each pair $(\man,\woman) \in PBP$ interview each other -- a total of $|PBP|$ interviews.
	\item Each $\man \in \manset'$ interviews his partner $\matching(\man)$ -- a total of $|\manset'|$ interviews.
	\item Each pair $(\man,\matching(\man)) \in C$ interview each other -- a total of $K$ interviews.
\end{enumerate}
The total number of interviews is then equal to $|PBP|+|\manset'|+K$.
As a result of the above interviews, each agent $\agent$ learns his or her strict preference ordering over the interviewed candidates, as in $\pref_{\agent}$.
(Recall that the interviews are informative to both sides.) It is then easy to see that all PBP-D1's are resolved. It is also straightforward to see that for a PBP-D2 $(\man,\woman)$, if either $\man \in \manset'$ or $\matching(\woman) \in \manset'$, then $(\man,\woman)$ is resolved under $\inputsetting'$. It remains to show that the remaining PBP-D2's, that is those in $PBP'_2$, are resolved as well. Let $(\man,\woman)$ be such a PBP-D2. By the construction of $G(\inputsetting,\matching)$, $V$ includes $(\man,\matching(\man))$ and $(\matching(\woman), \woman)$ and there is an edge between these two vertices. As $C$ is a vertex cover, at least one of $(\man,\matching(\man))$ or $(\matching(\woman), \woman)$ belongs to $C$. If $(\man,\matching(\man)) \in C$ then, following the results of the interviews, $\man$ prefers $\matching(\man)$ to $\woman$ under $\inputsetting'$. (A similar argument holds for $\woman$ if $(\matching(\woman), \woman) \in C$.) Therefore $(\man,\woman)$ is resolved under $\inputsetting'$. 

Conversely, assume that $\inputsetting$ admits an interview-compatible refinement $\inputsetting'$ of size at most $K'$ such that $\matching$ is super-stable in $\inputsetting'$. We show that $G(\inputsetting,\matching)$ admits a vertex cover of size at most $K' - (|PBP|+|\manset'|)$. Let $C$ be a set of vertices $(\man,\matching(\man))$ in $V$ where $\man$ and $\matching(\man)$ have interviewed under $\inputsetting'$. Note that as we have removed all vertices of degree zero from $G(\inputsetting,\matching)$, hence all remaining vertices are adjacent to at least one edge corresponding to a member of $PBP'_2$.
We show that $C$ is a vertex cover and then prove an upper bound on the size of $C$. 

\textbf{$C$ is a vertex cover}: Let $((\man,\matching(\man)), (\man', \matching(\man')))$ be any edge in $E$. By the construction of $G(\inputsetting,\matching)$, $(\man,\matching(\man'))$ or $(\man',\matching(\man))$ is in $PBP'_2$. Assume that $(\man,\matching(\man')) \in PBP'_2$. (The argument for the case where $(\man',\matching(\man)) \in PBP'_2$ is similar.) As $(\man,\matching(\man'))$ is resolved under $\inputsetting'$, either $\man$ prefers his partner to $\matching(\man')$ under $\inputsetting'$, or $\matching(\man')$ prefers her partner to $\man$ under $\inputsetting'$. If the former, then $\man$ must have interviewed $\matching(\man)$ and hence $(\man,\matching(\man)) \in C$, and if the latter then $\matching(\man')$ must have interviewed $\man'$ and thus $(\man', \matching(\man')) \in C$ . Thus $C$ is a vertex cover. 

\textbf{$C$ is of size at most $K' - (|PBP|+|\manset'|)$}: We prove this by computing a lower bound on the number of interviews that do not correspond to a vertex in $C$. It follows Proposition~\ref{prop:pbp} that all PBPs must have interviewed, hence a total of $|PBP|$ interviews. It also follows Lemma~\ref{lem:type2} that each agent $\agent$ with $PBP_1(\agent)\neq \emptyset$ must have interviewed his/her partner.
Looking at this from men's perspective, all men $\man$ must interview $\matching(\man)$ if $PBP_1(\man)\neq \emptyset$ or
$PBP_1(\matching(\man))\neq \emptyset$ -- hence a total of $|\manset'|$ interviews. Recall that $(\man,\matching(\man)) \notin V$ if $\man \in \manset'$. Therefore none of the interviews we have accounted for so far, a total of $|PBP|+|\manset'|$ interviews, correspond to a vertex in $C$.
\end{proof}

Theorem~\ref{thm:reduction-to-VC} essentially tells us that an instance $(\inputsetting,\pref_{\manset,\womanset},\matching)$ of \minicrx\ is polynomial-time solvable if \minvc\ is polynomial-time solvable in $G(\inputsetting,\matching)$. Equipped with this knowledge, we provide three different restricted settings under which \icrx, and hence \minicrx, is solvable in polynomial time.

\begin{theorem}\label{thm:one-side-full}
\minicrx\ is solvable in polynomial time if one side has fully known strict preference ordering.
\end{theorem}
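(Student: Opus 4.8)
The plan is to reduce everything to Theorem~\ref{thm:reduction-to-VC} and observe that, under the stated hypothesis, the graph $G(\inputsetting,\matching)$ built there has no edges, so the \minvc\ sub-instance is trivial and the optimal interview set can be read off directly. Without loss of generality I would assume that the men's side has fully known strict preferences, i.e.\ $\partialorder_{\man_i}=\pref_{\man_i}$ is already a strict total order for every $i$ (the argument for the women's side is symmetric, since the key structural fact below is insensitive to which side is fixed).

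The crucial observation is that $PBP_2=\emptyset$. Indeed, as noted immediately after the definition of PBP-D2, a PBP-D2 $(\man,\woman)$ requires $\man$ to be unable to compare $\woman$ with $\matching(\man)$ under $\inputsetting$; but $\man$'s list is a strict total order, so $\man$ can compare any two of his acceptable candidates — a contradiction. (Here I would also record that, since $\matching$ is given as weakly — hence, the profile being strict, fully — stable with respect to $\pref_{\manset,\womanset}$, every potential blocking pair $(\man,\woman)$ in fact has $\man$ matched and strictly preferring $\woman$ to $\matching(\man)$ under $\pref_{\manset,\womanset}$, with $\woman$ matched and preferring $\matching(\woman)$ to $\man$ under $\pref_{\manset,\womanset}$; so every PBP is of degree~$1$, confirming $PBP_2=\emptyset$.) Consequently $PBP'_2=\emptyset$, so $G(\inputsetting,\matching)$ has no edges, and after deleting degree-zero vertices it has no vertices at all; its minimum vertex cover has size $0$. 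By Theorem~\ref{thm:reduction-to-VC} with $K=0$, the minimum cost of an interview-compatible refinement $\inputsetting'$ with $\pref_{\manset,\womanset}$ refining $\inputsetting'$ and $\matching$ super-stable in $\inputsetting'$ is exactly $|PBP|+|\manset'|$, and an optimal $\inputsetting'$ is obtained by letting every pair in $PBP$ interview and letting every $\man\in\manset'$ interview $\matching(\man)$ — precisely the construction in the forward direction of the proof of Theorem~\ref{thm:reduction-to-VC} specialised to $C=\emptyset$. Propositions~\ref{prop:resolved} and~\ref{prop:pbp} and Lemma~\ref{lem:type2} then certify both correctness and optimality of this refinement.

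Finally I would note that everything is polynomial-time computable: the sets $PBP$, $PBP_1$, $PBP_2$, $\manset'$ and the (edgeless) graph $G(\inputsetting,\matching)$ are obtained by testing each (man, woman) pair against $\inputsetting$ and $\pref_{\manset,\womanset}$, and the optimal interview schedule, together with the resulting refined instance $\inputsetting'$ and the super-stable matching $\matching$, is then output directly. I do not expect a real obstacle here: the only points needing care are the structural claim $PBP_2=\emptyset$ and the accompanying bookkeeping that the explicit interview set with $C=\emptyset$ genuinely resolves every PBP-D1 — concretely, that for a PBP-D1 $(\man,\woman)$ with $\man$ preferring $\woman$, the woman $\woman$ interviews both $\man$ (because $(\man,\woman)\in PBP$) and $\matching(\woman)$ (because $PBP_1(\woman)\neq\emptyset$ forces $\matching(\woman)\in\manset'$), so $\woman$ learns $\matching(\woman)\pref_{\woman}\man$ and the pair is resolved. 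The fixed side simply collapses the vertex-cover instance, which is what makes the problem tractable.
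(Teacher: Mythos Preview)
Your proposal is correct and follows essentially the same approach as the paper: the paper's proof (assuming women have strict preferences, whereas you assume men do, which you rightly note is symmetric) observes that all PBPs are of degree~1, so $G(\inputsetting,\matching)$ is empty with vertex cover of size zero, and then invokes Proposition~\ref{prop:pbp} and Lemma~\ref{lem:type2} to conclude that the optimal solution has size $|PBP|+|\manset'|$. Your write-up is more detailed---in particular you spell out why $PBP_2=\emptyset$ and verify explicitly that the $C=\emptyset$ interview set resolves every PBP-D1---but the underlying argument is the same.
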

\begin{proof}
Assume that women have strict preferences and the target matching is $\matching$. Note that all PBPs must be of degree 1. Therefore $G(\inputsetting,\matching)$ is an empty graph with vertex cover of size zero. It follows from Proposition~\ref{prop:pbp} and Lemma~\ref{lem:type2} that \minicr\ has a solution of size $|PBP|+|\manset'|$.
\end{proof}

\begin{theorem}\label{thm:atmost-three}
\minicrx\ is solvable in polynomial time under the restriction of \smti\ in which indifference classes are of size at most 2.
\end{theorem}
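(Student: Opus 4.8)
The plan is to apply Theorem~\ref{thm:reduction-to-VC}, which reduces a \minicrx\ instance $(\inputsetting,\pref_{\manset,\womanset},\matching)$ to finding a minimum vertex cover in the graph $G(\inputsetting,\matching)$: the minimum cost of a valid refinement equals $|PBP|+|\manset'|+\tau$, where $\tau$ is the size of a minimum vertex cover of $G(\inputsetting,\matching)$, and an optimal refinement can be read off from an optimal vertex cover via the (constructive) proof of that theorem. Since $PBP$, $\manset'$, $PBP'_2$, and hence $G(\inputsetting,\matching)$, are all computable in polynomial time from $(\inputsetting,\pref_{\manset,\womanset},\matching)$ (for each acceptable pair one checks in constant time whether it is a PBP and, if so, of which degree), it suffices to show that, under the restriction that indifference classes have size at most $2$, \minvc\ is polynomial-time solvable on graphs of the form $G(\inputsetting,\matching)$.

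The crucial observation is that every vertex of $G(\inputsetting,\matching)$ then has degree at most $2$. First I would argue that each man is involved in at most one PBP-D2: if $(\man,\woman)$ is a PBP-D2 then $\man$ cannot compare $\woman$ with $\matching(\man)$ under $\inputsetting$, so in the \smti\ instance $\woman$ lies in the same indifference class of $\man$ as $\matching(\man)$; as that class has size at most $2$ it is exactly $\{\matching(\man),\woman\}$, which pins down $\woman$ uniquely. By the symmetric argument, each woman is involved in at most one PBP-D2. Now take a vertex $(\man,\matching(\man))$ of $G(\inputsetting,\matching)$: an incident edge corresponds either to a PBP-D2 of the form $(\man,\matching(\man'))$ (at most one, by the bound for the man $\man$) or to a PBP-D2 of the form $(\man',\matching(\man))$ (at most one, by the bound for the woman $\matching(\man)$); restricting attention to $PBP'_2$ and deleting degree-zero vertices only removes edges, so the degree of $(\man,\matching(\man))$ is at most $2$.

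A graph of maximum degree at most $2$ is a vertex-disjoint union of simple paths and simple cycles, and on every such graph a minimum vertex cover can be found in linear time --- e.g.\ by a trivial dynamic program along each path and around each cycle, or via the identity that $\tau$ equals the number of edges minus the size of a maximum matching, which is elementary for paths and cycles. Feeding the resulting cover back through Theorem~\ref{thm:reduction-to-VC} yields a polynomial-time algorithm for \icrx, and therefore for \minicrx, in this restricted setting.

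I expect the degree-bound argument to be the only real obstacle: one has to separate cleanly the two ways an edge of $G(\inputsetting,\matching)$ can be generated --- a PBP-D2 with $\man$ on the men's side versus one with $\matching(\man)$ on the women's side --- and check that the pruning built into the definition of $PBP'_2$ (excluding pairs that touch $\manset'$) together with the removal of isolated vertices does not break the bound. Everything after that, namely the structure of bounded-degree graphs and computing vertex covers on paths and cycles, is standard.
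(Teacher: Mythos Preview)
Your proposal is correct and follows essentially the same route as the paper: both arguments show that $G(\inputsetting,\matching)$ has maximum degree at most $2$ by observing that an edge incident to $(\man,\matching(\man))$ arises either from a $PBP'_2$ pair with $\man$ on the men's side (at most one, since $\matching(\man)$ and the other woman must share an indifference class of size $\le 2$) or with $\matching(\man)$ on the women's side (at most one, symmetrically), and then conclude that \minvc\ is trivial on the resulting union of paths and cycles. Your phrasing via ``each man/woman lies in at most one PBP-D2'' is a clean way to package the same counting the paper does directly on neighbours.
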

\begin{proof}
We show that $G(\inputsetting,\matching)$  is a collection of cycles and paths, and hence its minimum vertex cover can be computed in polynomial time. The size of a minimum vertex cover for any path or cycle of length $\ell$ is $\myceil{\frac{\ell}{2}}$.

Take any vertex $v_1 = (\man,\matching(\man))$ in $V$. Recall that if any vertex $v_2  = (\man',\matching(\man'))$ is a neighbor of $v_1$, then it must be that at least one of $(\man,\matching(\man'))$ or $(\man',\matching(\man))$ is in $PBP'_2$. Note that if $(\man,\matching(\man')) \in PBP'_2$, then under $\inputsetting$ man $\man$ is indifferent between $\matching(\man)$ and $\matching(\man')$. Since each indifference class is of size at most 2, at most one such neighbor exists. Likewise, if $(\man',\matching(\man)) \in PBP'_2$ then $\matching(\man)$ is indifferent between $\man$ and $\man'$. However, since each indifference class is of size at most 2, at most one such neighbor exist. Thus, each vertex has degree at most 2, hence $G(\inputsetting,\matching)$  is a collection of cycles and paths.
\end{proof}

\begin{theorem}\label{thm:common-info}
\minicrx\ is solvable in polynomial time under the restriction of \smti\ in which all men are endowed with the same indifference classes, as well as all women. That is $\eqclass_i^{\man}=\eqclass_i^{\man'}$ for all $\man, \man' \in \manset$ and all $i\in[\numofwomen]$, and $\eqclass_i^{\woman}=\eqclass_i^{\woman'}$ for all $\woman, \woman' \in \womanset$ and all $i\in[\numofmen]$.
\end{theorem}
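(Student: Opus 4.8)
The plan is to invoke Theorem~\ref{thm:reduction-to-VC}: by that theorem together with the remark following it, solving \minicrx\ on an instance $(\inputsetting,\pref_{\manset,\womanset},\matching)$ reduces to computing a minimum vertex cover of the graph $G(\inputsetting,\matching)$. It therefore suffices to show that, in the common-classes setting, $G(\inputsetting,\matching)$ is a disjoint union of cliques, since a minimum vertex cover of such a graph is trivial to compute: a clique $K_n$ needs exactly $n-1$ vertices, so a disjoint union of $t$ cliques on $|V|$ vertices altogether needs $|V|-t$.

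The engine of the argument is a \emph{colouring} of the vertices of $G(\inputsetting,\matching)$. Here all men share one partition of the women into indifference classes $\eqclass_1^{\man},\eqclass_2^{\man},\dots$, and all women share one partition of the men into $\eqclass_1^{\woman},\eqclass_2^{\woman},\dots$. Colour each vertex $(\man,\matching(\man))$ by the pair $(a,b)$ where $\man\in\eqclass_a^{\woman}$ and $\matching(\man)\in\eqclass_b^{\man}$. The first thing I would prove is that \textbf{every edge of $G(\inputsetting,\matching)$ joins two vertices of the same colour}. An edge between $(\man,\matching(\man))$ and $(\man',\matching(\man'))$ comes from some member of $PBP'_2$, say $(\man,\matching(\man'))\in PBP'_2\subseteq PBP_2$. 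Write $\woman'=\matching(\man')$. Being a PBP-D2, both $\man$ and $\woman'$ strictly prefer their own partners under $\pref_{\manset,\womanset}$; but $(\man,\woman')$ is also a very weak blocking pair under $\inputsetting$, as every PBP is. Since $\pref_{\manset,\womanset}$ refines $\inputsetting$, this forces $\man$ to be unable to compare $\matching(\man)$ with $\woman'$ under $\inputsetting$, and $\woman'$ to be unable to compare $\man$ with $\man'=\matching(\woman')$ under $\inputsetting$. In the common-classes setting these two incomparabilities say exactly that $\matching(\man)$ and $\matching(\man')$ lie in the same men's class (equal second coordinate) and that $\man$ and $\man'$ lie in the same women's class (equal first coordinate); the case $(\man',\matching(\man))\in PBP'_2$ is symmetric.

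The second thing I would prove is that \textbf{within each colour class the remaining vertices of $G(\inputsetting,\matching)$ induce a complete graph}. Let $(\man_i,\woman_i)$ and $(\man_j,\woman_j)$ with $i\neq j$ be remaining vertices of colour $(a,b)$, and recall that $G(\inputsetting,\matching)$ deletes isolated vertices and that every remaining vertex $(\man,\matching(\man))$ satisfies $\man\notin\manset'$. Since $\woman_i$ and $\woman_j$ both lie in the common men's class $\eqclass_b^{\man}$, man $\man_i$ cannot compare them under $\inputsetting$; since $\man_i$ and $\man_j$ both lie in the common women's class $\eqclass_a^{\woman}$, woman $\woman_j$ cannot compare them; and $(\man_i,\woman_j)$ is an acceptable pair. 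Hence $(\man_i,\woman_j)$ is a very weak blocking pair under $\inputsetting$, i.e.\ a PBP. It cannot be a PBP-D1: if $\man_i$ strictly preferred $\woman_j$ to $\woman_i=\matching(\man_i)$ then $\man_i\in PBP_1(\woman_j)$, so $\matching(\woman_j)=\man_j\in\manset'$, a contradiction; if instead $\woman_j$ strictly preferred $\man_i$ to $\man_j=\matching(\woman_j)$ then $\woman_j\in PBP_1(\man_i)$, so $\man_i\in\manset'$, again a contradiction. Thus $(\man_i,\woman_j)\in PBP_2$, and since $\man_i\notin\manset'$ and $\matching(\woman_j)=\man_j\notin\manset'$ we get $(\man_i,\woman_j)\in PBP'_2$; therefore $(\man_i,\woman_i)$ and $(\man_j,\woman_j)$ are adjacent.

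Combining the two facts, $G(\inputsetting,\matching)$ is a disjoint union of cliques, one per colour class that still contains a vertex. Its minimum vertex cover is then computed in polynomial time, and hence, by Theorem~\ref{thm:reduction-to-VC}, so is the optimal cost of the \minicrx\ instance. The step I expect to need the most care is the exclusion of PBP-D1's in the second fact: it rests on unwinding the definitions of $\manset'$ and $PBP_1(\cdot)$ carefully, together with the observation that every remaining vertex $(\man,\matching(\man))$ of $G(\inputsetting,\matching)$ satisfies $\man\notin\manset'$. The colouring bookkeeping and the vertex-cover computation are otherwise routine.
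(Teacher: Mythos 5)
Your proposal is correct and follows essentially the same route as the paper: reduce via Theorem~\ref{thm:reduction-to-VC} to minimum vertex cover on $G(\inputsetting,\matching)$, show that this graph is a disjoint union of cliques, and rule out PBP-D1's using the fact that every non-isolated vertex $(\man,\matching(\man))$ has $\man\notin\manset'$. Your colouring by common-class pairs is just a slightly different packaging of the paper's argument, which instead shows directly that two edges sharing a vertex force the third edge (so connected components are cliques); both hinge on the same observations about incomparability within common indifference classes and the definition of $\manset'$.
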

\begin{proof}
We show that $G(\inputsetting,\matching)$ is a collection of complete graphs, and hence its minimum vertex cover can be computed in polynomial time, since the size of a minimum vertex cover for any complete graph $K_{\ell}$ is equal to $\ell-1$. To prove that $G(\inputsetting,\matching)$ is a collection of complete graphs, we show that for any three given vertices $v_1, v_2$ and $v_3$, if $(v_1,v_2) \in E$ and $(v_1,v_3)\in E$ then $(v_2,v_3)\in E$. 


Take any three vertices $v_1 = (\man,\matching(\man))$, $v_2 = (\man',\matching(\man'))$, and $v_3 = (\man'',\matching(\man''))$. If $(v_1,v_2) \in E$ then, under $\inputsetting$, all men are indifferent between $\matching(\man)$ and $\matching(\man')$, all women are indifferent between $\man$ and $\man'$, and $\man, \man'\notin M'$. If $(v_1,v_3) \in E$ then, under $\inputsetting$, all men are indifferent between $\matching(\man)$ and $\matching(\man'')$, all women are indifferent between $\man$ and $\man''$, and $\man''\notin M'$. Therefore, since $\inputsetting$ is an instance of SMTI, all men are indifferent between $\matching(\man)$, $\matching(\man')$ and $\matching(\man'')$, and all women are indifferent between $\man$, $\man'$, and $\man''$. Hence $(\man',\matching(\man''))$ and $(\man'',\matching(\man'))$ are PBPs. If $(\man',\matching(\man''))$ is a PBP-D2 then, as $\man', \man'' \notin M'$, $(\man',\matching(\man''))\in PBP'_2$ and therefore $(v_2,v_3)\in E$. Assume for a contradiction that $(\man',\matching(\man''))$ is a PBP-D1. Assume that $\matching(\man'') \pref_{\man'} \matching(\man')$ (the argument is similar if $\man' \pref_{\matching(\man'')} \man''$), implying that $PBP_1(\matching(\man'')) \neq \emptyset$ and thus $\man''\in M'$, a contradiction.
\end{proof}

Theorem \ref{thm:atmost-three} is likely to be of more theoretical interest.  For the setting of Theorem \ref{thm:one-side-full}, we could envisage a hospitals-residents matching problem where residents are ranked uniformly (i.e., in a "master list" common to all hospitals \cite{scott-master}) according to some known objective value (e.g., which may be based on academic merit, as in the UK) and residents must use interviews in order to determine their true preferences over acceptable hospitals. For the setting of Theorem \ref{thm:common-info}, consider a market with ``tiered'' preferences, where everybody agrees who/what belongs to each tier (again the membership of these tiers could relate to some objective values), but the precise ordering within these tiers could be subjective, and up to individuals to determine themselves. For example, students may use national league tables for determining top tier universities, second tier universities and so on, but students' precise ranking over the universities in any given tier may vary.

If $\inputsetting$ is of one of the restricted forms for which \minicrx\ is polynomial time solvable, then one straightforward approach to solving \minicr\ is to enumerate all matchings that are stable under $\pref_{\manset,\womanset}$ and then solve \minicrx\ for each of them. This approach is practical if $\pref_{\manset,\womanset}$ admits a polynomial number of stable matchings.

%
\section{Conclusion and Future Work}
\label{sec:conc}
In this paper we have studied the complexity of the offline problem relating to computing an optimal interview strategy for a stable marriage market where initially participants have incomplete information, and the aim is to refine the instance using the minimum number of interviews in order to arrive at a super-stable matching.  The main direction for future work is to investigate the online case, where the true underlying preferences are not known to the mechanism designer, with respect to measures such as the competitive ratio.
Furthermore, an important question for which we do not know an answer yet is whether \minicr\ is polynomial-time solvable under some restricted setting. Extending the known results on interviewing in stable marriage markets to many-to-one markets such as college admission is another important future direction. It is also interesting to study online algorithms in a setting where elicitation is taking place via comparison queries. 
In this paper we assume that the objective of the mechanism designer is to minimize the total number of interviews overall. One may however argue that such a strategy may require one or may agents to conduct most of the interviews while the others do none or very little. In the view of fairness and the practicality of such central interview-scheduling schemes, it is also of utmost importance to study settings in which a fair distribution of the interviews is also considered.

\bibliographystyle{abbrv}
\bibliography{bibfile}

\end{document}